\def\draft{n}
\newtheorem{theorem}{Theorem}[section]
\theoremstyle{definition}
\newtheorem{proposition}[theorem]{Proposition}
\newtheorem{lemma}[theorem]{Lemma}
\newtheorem{remark}[theorem]{Remark}
\newtheorem{example}[theorem]{Example}
\def\printname#1{
        \if\draft y
                \smash{\makebox[0pt]{\hspace{-0.5in}
                        \raisebox{8pt}{\tt\tiny #1}}}
        \fi
}
\newlength{\standardunitlength}
\long\def\@makecaption#1#2{%
     \vskip 10pt

\setbox\@tempboxa\hbox{
       \small\sf{\bfcaptionfont #1. }\ignorespaces #2}%
     \ifdim \wd\@tempboxa >\captionwidth {%
         \rightskip=\@captionmargin\leftskip=\@captionmargin
         \unhbox\@tempboxa\par}%
       \else
         \hbox to\hsize{\hfil\box\@tempboxa\hfil}%
     \fi}
\font\bfcaptionfont=cmssbx10 scaled \magstephalf
\newdimen\@captionmargin\@captionmargin=2\parindent
\newdimen\captionwidth\captionwidth=\hsize
\newcommand{\tr}{\operatorname{Tr}}
\def\lbl#1{\label{#1}\printname{#1}}
\def\BN{\mathbbm N}
\def\BZ{\mathbbm Z}
\def\IP{\mathbbm P}
\def\IF{\mathbbm F}
\def\BQ{\mathbbm Q}
\def\BR{\mathbbm R}
\def\CO{\mathcal{O}}
\def\CB{\mathcal{B}}
\def\tq{\tilde{q}}
\newcommand{\fad}{\operatorname{\Phi}_{\mathsf{b}}}
\newcommand{\fadi}{\fad^*}
\newcommand{\fadn}{\phi_{\mathsf{b}}}
\newcommand{\mypsi}[2]{\operatorname{\Psi}_{#1,#2}}
\newcommand{\myh}{\mathsf{h}_{\mathsf{b}}}
\newcommand{\myset}{\Delta_{\mathsf{b}}}
\newcommand{\be}{\begin{equation}}
\newcommand{\ee}{\end{equation}}
\newcommand{\ba}{\begin{aligned}}
\newcommand{\ea}{\end{aligned}}
\newcommand{\re}{{\rm e}}
\newcommand{\ri}{{\rm i}}
\newcommand{\rd}{{\rm d}}
\newcommand{\im}{\mathsf{i}}
\begin{document}
\title[Operators from mirror curves]{Operators from mirror curves and the quantum dilogarithm}
\author{Rinat Kashaev}
\address{Section de Math\'ematiques, Universit\'e de Gen\`eve \\
2-4 rue du Li\`evre, Case Postale 64, 1211 Gen\`eve 4, Switzerland \newline
         {\tt \url{http://www.unige.ch/math/folks/kashaev}}}
\email{Rinat.Kashaev@unige.ch}
\author{Marcos Mari\~no}
\address{Section de Math\'ematiques et D\'epartement de Physique Th\'eorique\\
Universit\'e de Gen\`eve, 1211 Gen\`eve 4, Switzerland }
\email{marcos.marino@unige.ch}
\thanks{%
R.K. and M.M. are supported in part by the Swiss National Science Foundation, subsidies 200020-141329, 200020-149226, 200021-156995, 
and by the NCCR 51NF40-141869 ``The Mathematics of Physics" (SwissMAP).
\bigskip\\
{\em 2010 Mathematics Subject Classification:}
  Primary 34K08. Secondary 14J33, 14J81, 81T30.\\
{\em Key words and phrases:}
mirror symmetry, quantum dilogarithm}

\date{\today }

\begin{abstract}
Mirror manifolds to toric Calabi--Yau threefolds are encoded in algebraic curves. The quantization of these 
curves leads naturally to quantum-mechanical operators on the real line. We show that, for a large number of 
local del Pezzo Calabi--Yau threefolds, these operators are of trace class. In some simple geometries, like local $\IP^2$, we calculate the integral kernel 
of the corresponding operators in terms of Faddeev's quantum dilogarithm. 
Their spectral traces are expressed in terms of multi-dimensional integrals, similar to the state-integrals appearing 
in three-manifold topology, and we show that they can be evaluated explicitly in some cases. Our results provide further verifications of a recent conjecture 
which gives an explicit expression for the Fredholm determinant of these operators, in terms of enumerative invariants of the underlying Calabi--Yau threefolds.   
\end{abstract}

\maketitle

\tableofcontents


\section{Introduction}
\lbl{sec.intro}


The problem of ``quantizing" in a suitable way an algebraic curve appears in many different 
contexts in mathematical physics. For example, the quantization of spectral curves associated 
to integrable systems leads to the Baxter equation. Solving the spectral problem for the Baxter equation, 
after imposing appropriate boundary conditions, gives the solution of the original eigenvalue problem. 
In other cases, the classical limit 
of a quantum theory is encoded in an algebraic curve, and a suitable quantization of the curve makes it possible to go 
beyond the classical limit and to obtain the quantum corrections. 

One important example of such a situation is local mirror symmetry \cite{KKV,CKYZ}. The mirror manifold to 
a toric Calabi--Yau manifold $X$ reduces to an algebraic curve in the exponentiated variables $x$, $y$, of the form 
\be
\label{mc}
W(\re^x, \re^y)=0. 
\ee
The genus zero Gromov--Witten invariants of $X$ can be obtained by computing periods of this curve. 
In \cite{ADKMV} it was pointed out that the higher genus invariants (which can be 
regarded as ``quantum corrections") might be obtained by quantizing in an appropriate way the mirror curve. Building on work 
on supersymmetric gauge theories and quantum integrable systems \cite{NS}, it was shown in \cite{ACDKV} that mirror curves can be 
formally quantized by using the WKB approximation (this method was previously used in \cite{MM}, in the context of Seiberg--Witten curves). 
The quantum corrections obtained in this way 
do not correspond to the conventional higher genus Gromov--Witten invariants, but to a peculiar combination of the 
``refined" BPS invariants of the toric Calabi--Yau manifold \cite{IKV,CKK,NO}. 

The quantization prescription of \cite{ACDKV} leads to ``quantum" periods which are formal WKB series in powers of $\hbar$. 
However, in \cite{KM} it was pointed out that one can associate operators with a
well-defined, discrete spectrum, to the mirror curves of toric Calabi--Yau threefolds. 
The WKB series considered in \cite{ACDKV} turns out to be insufficient to determine this spectrum, 
as it misses non-perturbative corrections of the instanton type. It was suggested 
in \cite{KM} that these corrections would involve the conventional higher genus Gromov--Witten invariants of $X$, 
and this suggestion was verified in some examples in \cite{KM,HW}. These observations were 
deepened and put on a firmer ground in \cite{GHM}, where it was conjectured that, given a mirror curve to a toric 
Calabi--Yau threefold $X$, one can associate to it a trace class operator. Furthermore, \cite{GHM} proposed 
an exact formula for the Fredholm determinant of this trace class operator in terms of the enumerative geometry of $X$. Both 
conjectures were tested in detail in a number of examples, both analytically and numerically.  
The proposal of \cite{GHM} focused on the case in which the mirror curve is of genus one, i.e. in the case in which 
$X$ is a local del Pezzo Calabi--Yau threefold, but it can be suitably generalized to the case of higher genus curves.

In this paper we will perform a detailed study of the operators associated to mirror curves by the procedure 
explained in \cite{GHM}. 
We will prove that, indeed, for a large number of local del Pezzo threefolds, they lead to 
positive-definite, trace class operators on $L^2(\BR)$. Therefore, their spectrum is discrete and positive (as 
it was shown numerically in some examples in \cite{KM,HW,GHM}), and their Fredholm determinants are well-defined. 
One of the most interesting aspects of our study is that 
the operators obtained in this way are closely related to those appearing in 
the quantization of Teichm\"uller theory \cite{K1,K2, K3, K4, FC}. In particular, in the simple case of three-term operators, 
we compute explicitly their integral kernels, which involve 
in an essential way Faddeev's quantum dilogarithm function \cite{Faddeev,FK-QDL}. The spectral traces 
of these operators lead then to multi-dimensional integrals 
that are formally very similar to the state-integral invariants of three-manifolds 
studied recently in \cite{Hi,DGLZ,AK,KLV,DG,AK:complex,Dim}. We use recent techniques 
for the evaluation of state integrals \cite{GK} to calculate these spectral traces in various cases, and we verify that the results 
fully agree with the predictions of \cite{GHM}. 

This paper is organized as follows. In section 2 we give a brief review of the construction of the operators 
from mirror curves, and after introducing some of the necessary ingredients, 
we prove that the relevant operators are of trace class. In addition, we calculate the 
explicit expression of their integral kernels in the simple case of three-term operators. In section 3 we write down 
formulae for the spectral traces of the three-term operators, and we evaluate them explicitly in some cases. In section 4 we compare our 
results to the conjecture of \cite{GHM} for the 
Fredholm determinant of these operators. Finally, in section 5 we conclude and list some interesting open problems.

\subsection*{Acknowledgments}
We would like to thank Andrea Brini, Alba Grassi, Jie Gu, Yasuyuki Hatsuda, Albrecht Klemm, Jonas Reuter, Shamil Shakirov 
and Leon Takhtajan for useful conversations and correspondence. 


\section{Trace class operators and mirror curves}
\lbl{sec.int.quasi}

\subsection{Mirror curves}

The operators which we will study in this paper arise by quantizing mirror curves (in an appropriate sense). We will focus for simplicity on toric (almost) 
del Pezzo Calabi--Yau threefolds, which are 
defined as the total space of the canonical line bundle on a toric (almost) del Pezzo surface $S$,
\be
\label{dP}
X=\CO(K_S) \rightarrow S. 
\ee
They are sometimes called ``local $S$," so for example if $S= \IP^2$, the total space of its canonical line bundle will be called local $\IP^2$. Examples 
of toric (almost) del Pezzos include, besides $\IP^2$, the  Hirzebruch surfaces $\IF_n$, $n=0, 1,2$, and the blowups of $\IP^2$ at $n$ points, denoted by 
${\mathcal B}_n$, for $n=1,2,3$ (note that $\IF_1= {\mathcal B}_1$, and that $\IF_0= \IP^1 \times \IP^1$). 

By standard results in toric geometry (see for example \cite{HKP,CR}), 
toric, almost del Pezzo surfaces can be classified by reflexive polyhedra in two dimensions. The polyhedron $\Delta_S$ associated to a surface $S$ is 
the convex hull of a set of two-dimensional 
vectors 
\be
\label{vector}
\nu^{(i)}=\left(\nu^{(i)}_1, \nu^{(i)}_2\right), \qquad i=1, \cdots, k+2, 
\ee
together with the origin. In order to construct the total space of the canonical line bundle over $S$, we have to consider the extended vectors 
\be
\ba
\overline \nu^{(0)}&=(1, 0,0), \\
\overline \nu^{(i)}&=\left(1, \nu^{(i)}_1, \nu^{(i)}_2\right),\qquad i=1, \cdots, k+2. 
\ea
\ee
They satisfy the relations 
\be
\sum_{i=0}^{k+2} Q^\alpha_i \overline \nu^{(i)}=0, 
\ee
where $Q^\alpha_i$ is a matrix of integers (called the charge matrix) which characterizes the geometry. 

The construction of the mirror geometry to (\ref{dP}) goes back to Batyrev, and it has been recently reviewed in \cite{CR}, 
to which we refer for further details. 
In order to write down the equation for the mirror curve 
to the Calabi--Yau (\ref{dP}), we note that it depends on $k$ complex moduli $z_\alpha$, $\alpha=1, \cdots, k$, which can be parametrized in many possible ways. The most 
useful parametrization involves a modulus $\tilde u$ and a set of ``mass" parameters $\zeta_i$, $i=1, \cdots, r$, where $r$ depends on the geometry 
under consideration \cite{HKP,HKRS}. In terms of 
these variables, the mirror curve for a local del Pezzo Calabi--Yau threefold can be written as, 
\be
\label{ex-W}
W(\re^x, \re^y)= \CO_S(x,y)+ \tilde u=0,  
\ee
where
\be
\label{coxp}
 \CO_S (x,y)=\sum_{i=1}^{k+2} \exp\left( \nu^{(i)}_1 x+  \nu^{(i)}_2 y + f_i(\zeta_j) \right), 
 \ee
and $f_i(\zeta_j)$ are suitable functions of the parameters $\zeta_j$. 

\begin{example} The simplest case of a local del Pezzo is local $\IP^2$. In this case, we have $k=1$. 
The vectors (\ref{vector}) are given by 
\be
\nu^{(1)}=(1,0), \qquad \nu^{(2)}=(0,1), \qquad \nu^{(3)}=(-1,-1). 
\ee
In this geometry there is one complex deformation parameter $\tilde u$, and the function $\CO_{\IP^2}$ is given by  
\be
\label{lp2}
\CO_{\IP^2} \left(x, y \right)= \re^{ x} + \re^{y} + \re^{- x- y}. 
\ee
\end{example}

\begin{example} 
The previous example can be generalized by considering the canonical line bundle over the 
weighted projective space $\IP (1, m, n)$, where $m, n \in \BZ_{>0}$. This is not a smooth manifold, but 
it can be analyzed by using extensions of Gromov--Witten theory, see for example \cite{BC} for a study of the case $n=1$. The vectors are in this case 
\be
\nu^{(1)}=(1,0), \qquad \nu^{(2)}=(0,1), \qquad \nu^{(3)}=(-m,-n), 
\ee
and the function $\CO$ appearing in the mirror curve (\ref{ex-W}) is given by 
\be
\label{lpnm}
\CO_{m,n} \left(x, y \right)= \re^{ x} + \re^{y} + \re^{- m x- n y}. 
\ee
Some of these geometries can arise as degeneration limits of toric del Pezzos. For example, the 
mirror curve to local $\IF_2$ is characterized by the function 
\be
\CO_{\IF_2} \left(x, y \right)= \re^{ x} + \re^{y} + \re^{- 2 x- y} + \zeta \re^{-x}, 
\ee
and when $\zeta=0$ we recover the geometry (\ref{lpnm}) with $m=2$ and $n=1$.
\end{example} 

Some examples of functions  obtained from mirror curves of local del Pezzos can be found in table \ref{table-ops}. Details on the corresponding 
geometries can be found in for example \cite{HKP}. 

\begin{table}
\centering
\begin{tabular}{||  l || l || l || l ||}
\hline
$S$  & $\CO_S (x,y)$  \\ \hline\hline
$\IP^2$ &  $\re^{x}+ \re^y + \re^{-x-y}$  \\ \hline \hline
$\IF_0$ &  $\re^x+ \zeta \re^{-x} + \re^y + \re^{-y}$  \\ \hline \hline
$\IF_1$ &  $\re^x+ \zeta \re^{-x} + \re^y +  \re^{-x -y}$  \\ \hline \hline
$\IF_2$ & $\re^x+ \zeta \re^{-x} + \re^y + \re^{ -2x-y }$ \\ \hline \hline
$\CB_2$ & $ \re^{x} +  \re^{y } +  \re^{-x-y} +\zeta_1 \re^{-y}+ \zeta_2 \re^{-x}$  \\ \hline \hline
$\CB_3$ & $ \re^x +  \re^y+ \re^{-x-y} + \zeta_1 \re^{-x} +\zeta_2 \re^{-y}+ \zeta_3 \re^{x+ y}$ \\ \hline 
\end{tabular}
\caption{The functions $\CO_S(x,y)$ associated to some local del Pezzo Calabi--Yaus.}
\label{table-ops}
\end{table}

\subsection{Quantization} The ``quantization" of the mirror curve (\ref{ex-W}), in the case of local del Pezzos, is based 
on the promotion of the function $\CO_S(x,y)$ to an operator, which 
will be denoted by $\mathsf{O}_S$. This is achieved by simply promoting $x$, $y$ to self-adjoint Heisenberg 
operators $\mathsf{x}$, $\mathsf{y}$ satisfying the commutation relation 
\be
[\mathsf{x}, \mathsf{y}]=\im\hbar. 
\ee
Possible ordering ambiguities are resolved by using Weyl's prescription. As noted in \cite{GHM}, instead 
of studying $\mathsf{O}_S$ (which is not of trace class), one should rather consider its inverse 
\be
\rho_S=\mathsf{O}^{-1}_S. 
\ee
One of our goals in this paper is to show that, for a large number of choices of $S$, this operator exists and is of trace class. 

It will be useful to introduce normalized Heisenberg operators, $\mathsf{p}$ and  $\mathsf{q}$, satisfying the commutation relation
 \begin{equation}
[\mathsf{p},\mathsf{q}]=(2\pi\im)^{-1}.
\end{equation}
The ``coordinate representation'' is given by a realisation in the Hilbert space $L^2(\mathbb{R})$ by the formulae
\begin{equation}
\langle x|\mathsf{q}=x\langle x|,\quad \langle x|\mathsf{p}=\frac1{2\pi\im}\frac{\partial}{\partial x}\langle x|,\quad \langle x|y\rangle=\delta(x-y),\quad {\bf 1}=\int_{\mathbb{R}}|x\rangle \operatorname{d}\!x\langle x|,
\end{equation}
while the ``momentum representation'' is given by 
\begin{equation}
( x|\mathsf{p}=x(x|,\quad (x|\mathsf{q}=\frac\im{2\pi}\frac{\partial}{\partial x}( x|,\quad ( x|y)=\delta(x-y),\quad {\bf 1}=\int_{\mathbb{R}}|x) \operatorname{d}\!x(x|,
\end{equation}
and the transition between these two representations is given by the Fourier kernel
\begin{equation}
\langle x|y)=\re^{2\pi\im xy}.
\end{equation}

The following result is elementary (see also \cite{Si}). 
\begin{lemma}\label{hilb-schm}
 For any $f,g\in L^2(\mathbb{R})$, the operator
 \begin{equation}
\mathsf{G}\equiv f(\mathsf{q})g(\mathsf{p})
\end{equation}
is a Hilbert--Schmidt operator.
\end{lemma}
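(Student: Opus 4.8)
The plan is to exhibit $\mathsf{G}$ as an integral operator on $L^2(\mathbb{R})$ and to check directly that its kernel lies in $L^2(\mathbb{R}^2)$, since an operator is Hilbert--Schmidt precisely when this is the case. Working in the coordinate representation, $f(\mathsf{q})$ acts by multiplication by $f(x)$, while $g(\mathsf{p})$ is diagonalized in the momentum representation; inserting ${\bf 1}=\int_{\mathbb{R}}|p)\,\rd p\,(p|$ and using the Fourier kernel $\langle x|p)=\re^{2\pi\im xp}$ gives
\begin{equation}
\langle x|g(\mathsf{p})|y\rangle=\int_{\mathbb{R}}g(p)\,\re^{2\pi\im p(x-y)}\,\rd p=\hat g(x-y),
\end{equation}
where $\hat g(u)=\int_{\mathbb{R}}g(p)\re^{2\pi\im pu}\,\rd p$. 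Since $\langle x|f(\mathsf{q})=f(x)\langle x|$, the kernel of $\mathsf{G}=f(\mathsf{q})g(\mathsf{p})$ is therefore
\begin{equation}
\langle x|\mathsf{G}|y\rangle=f(x)\,\hat g(x-y).
\end{equation}

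The next step is to estimate the Hilbert--Schmidt norm of this kernel. By Tonelli's theorem and the translation invariance of Lebesgue measure,
\begin{equation}
\|\mathsf{G}\|_{\mathrm{HS}}^2=\int_{\mathbb{R}}\int_{\mathbb{R}}|f(x)|^2\,|\hat g(x-y)|^2\,\rd x\,\rd y=\left(\int_{\mathbb{R}}|f(x)|^2\,\rd x\right)\left(\int_{\mathbb{R}}|\hat g(u)|^2\,\rd u\right),
\end{equation}
and by the Plancherel theorem (the exponent $2\pi\im pu$ makes the transform unitary with no extra constant) the last factor equals $\|g\|_{L^2}^2$. Hence $\|\mathsf{G}\|_{\mathrm{HS}}=\|f\|_{L^2}\,\|g\|_{L^2}<\infty$, which is the claim.

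The only point requiring care — and the one I expect to be the main obstacle — is that for $f,g$ merely in $L^2(\mathbb{R})$ the operators $f(\mathsf{q})$ and $g(\mathsf{p})$ are each unbounded, so the formal product $f(\mathsf{q})g(\mathsf{p})$ and the kernel manipulation above need justification. I would handle this by a density argument: for $f,g$ in the Schwartz space the computation is rigorous and $\mathsf{G}$ is manifestly a well-defined Hilbert--Schmidt operator with the stated norm. For general $f,g\in L^2(\mathbb{R})$ choose Schwartz sequences $f_n\to f$ and $g_n\to g$ in $L^2$; the bilinear bound
\begin{equation}
\|f_n(\mathsf{q})g_n(\mathsf{p})-f_m(\mathsf{q})g_m(\mathsf{p})\|_{\mathrm{HS}}\le\|f_n-f_m\|_{L^2}\,\|g_n\|_{L^2}+\|f_m\|_{L^2}\,\|g_n-g_m\|_{L^2}
\end{equation}
shows $f_n(\mathsf{q})g_n(\mathsf{p})$ is Cauchy in Hilbert--Schmidt norm, hence converges to a Hilbert--Schmidt operator whose kernel is the $L^2(\mathbb{R}^2)$-limit $f(x)\hat g(x-y)$; one then checks that this limit operator agrees with $f(\mathsf{q})g(\mathsf{p})$ on a dense domain. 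Alternatively one can bypass the limit altogether by noting that for $\psi\in L^2(\mathbb{R})$ the vector $g(\mathsf{p})\psi$ has momentum-space representative $g(p)\tilde\psi(p)\in L^1(\mathbb{R})$, hence is a bounded function, so that $f(\mathsf{q})g(\mathsf{p})\psi=f\cdot\bigl(g(\mathsf{p})\psi\bigr)\in L^2(\mathbb{R})$ and the kernel identity holds pointwise; the Hilbert--Schmidt estimate above then applies verbatim.
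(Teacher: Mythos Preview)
Your argument is correct. The paper's proof differs in one small but pleasant way: instead of working entirely in the coordinate representation and invoking Plancherel, it computes the kernel in the \emph{mixed} coordinate--momentum basis,
\[
\langle x|\mathsf{G}|y)=f(x)g(y)\,\re^{2\pi\im xy},
\]
so that $|\langle x|\mathsf{G}|y)|^2=|f(x)|^2|g(y)|^2$ and $\operatorname{Tr}(\mathsf{G}^*\mathsf{G})=\|f\|^2\|g\|^2$ drops out immediately with no Fourier transform or change of variables. Your route via $\langle x|\mathsf{G}|y\rangle=f(x)\hat g(x-y)$ recovers the same equality, but only after a translation in $y$ and an appeal to Plancherel; the mixed basis short-circuits both steps by keeping each factor diagonal. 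On the other hand, your version has the advantage that you confront the domain issue and sketch a density argument, which the paper omits entirely; that care is not misplaced, though for the purposes of this paper the formal computation suffices.
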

\begin{proof}
 By using the integral kernel in the mixed basis 
\begin{equation}
 \langle x|\mathsf{G}|y)=f(x)g(y)\langle x|y),
 \end{equation}
we have
\begin{multline}
\operatorname{Tr}(\mathsf{G}^*\mathsf{G})=\int_{\mathbb{R}^2}(y|\mathsf{G}^*|x\rangle\langle x|\mathsf{G}|y)\operatorname{d}\!x\operatorname{d}\!y=
\int_{\mathbb{R}^2}\left|\langle x|\mathsf{G}|y)\right|^2\operatorname{d}\!x\operatorname{d}\!y\\
=
\int_{\mathbb{R}^2}\left|f(x)g(y)\right|^2\operatorname{d}\!x\operatorname{d}\!y
=\|f\|^2\|g\|^2<\infty
\end{multline}
where we use the $L^2$-norm
\begin{equation}
\| f\|^2\equiv\int_{\mathbb{R}}\left|f(x)\right|^2\operatorname{d}\!x.
\end{equation}
\end{proof}

In order to study the properties of the operators associated to local del Pezzo geometries, we will proceed in two steps. First, we will consider the three-term operators 
$\mathsf{O}_{m,n}$, obtained 
by quantization of the function (\ref{lpnm}), and we will establish that their inverse operators
\be
\label{rhoO}
 \rho_{m,n}\equiv \mathsf{O}^{-1}_{m,n}
 \ee
 exist and are trace class operators on $L^2(\BR)$. 
The operators associated to more general del Pezzos can be regarded as perturbations of three-term operators, and this makes it possible to 
show that they are as well of trace class. The operator associated to local $\IF_0$ is somewhat special, but it can be analyzed with similar techniques, as we will see. 

A key ingredient to study the operators $\rho_{m,n}$ is an explicit determination of their integral kernels, which involves Faddeev's quantum dilogarithm $\fad(x)$ 
\cite{Faddeev,FK-QDL}. A summary of the properties of this function can be found in the Appendix. Here, 
we list some preliminary results for the analysis of the three-term operators. 
We fix a positive real number $\mathsf{b}$ and define a set
\begin{equation}\label{myset}
\myset\equiv\left\{ (a,c)\in\mathbb{R}_{>0}^2\left|\ a+c<\myh\equiv\frac{\mathsf{b}+\mathsf{b}^{-1}}2\right.\right\}.
\end{equation}
We will also denote
\be
c_{\mathsf{b}}\equiv\im \myh. 
\ee
Define a function
\begin{equation}\label{mypsi}
\mypsi{a}{c}(x)\equiv \frac{\re^{2\pi ax}}{\fad(x-\im(a+c))},\quad x\in\mathbb{R},\ (a,c)\in\myset,
\end{equation}
which is a nowhere vanishing Schwartz function in variable $x$, i.e. a smooth and rapidly decreasing function at infinity. 
Indeed, the conditions on parameters $a$ and $c$ are such that $\mypsi{a}{c}(x)$ is a restriction of a meromorphic 
function in the complex plane whose poles and zeros do not belong to the real axis $\mathbb{R}\subset\mathbb{C}$, and
the formula
\begin{equation}
\lim_{x\to-\infty}\fad(x+\im y)=1, \quad \forall y\in\mathbb{R},
\end{equation}
implies that
\begin{equation}
\lim_{x\to-\infty}(\Psi_{a,c}(x)\re^{-2\pi a x}-1)=0.
\end{equation}
By using the equalities
\begin{equation}
|\mypsi{a}{c}(x)|=\left|\fad(-x+\im(a+c))\re^{2\pi ax-\pi\im(x-\im(a+c))^2}\right|=|\fad(-x+\im(a+c))|\re^{-2\pi cx},
\end{equation}
we also have
\begin{equation}
\lim_{x\to+\infty}(|\Psi_{a,c}(x)| \re^{2\pi c x}-1)=0.
\end{equation}

\begin{lemma}
 For any $(a_1,a_2)\in\myset$, one has
 \begin{equation}\label{norm-psi}
\|\mypsi{a_1}{a_2}\|^2=\prod_{j=1}^3|\fad(c_{\mathsf{b}}-2\im a_j)|,\quad a_3\equiv \myh-a_1-a_2.
\end{equation}
\end{lemma}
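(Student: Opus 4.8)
The computation is a direct one: write out $\|\mypsi{a_1}{a_2}\|^2=\int_{\mathbb{R}}|\mypsi{a_1}{a_2}(x)|^2\,\rd x$, use the identity for $|\mypsi{a}{c}(x)|$ already established in the excerpt, and recognize the resulting integral as a known closed-form integral of Faddeev's quantum dilogarithm. First I would substitute, using the preceding displayed equality with $(a,c)=(a_1,a_2)$, to get
\begin{equation}
\|\mypsi{a_1}{a_2}\|^2=\int_{\mathbb{R}}|\fad(-x+\im(a_1+a_2))|^2\,\re^{-4\pi a_2 x}\,\rd x.
\end{equation}
Next I would shift the contour, setting $x\mapsto x+\im(a_1+a_2)$ (justified because, as noted in the excerpt, the poles and zeros of $\fad(x-\im(a_1+a_2))$ avoid a horizontal strip around $\mathbb{R}$, and the integrand decays; here one uses that $(a_1,a_2)\in\myset$ so $a_1+a_2<\myh$), and use the inversion relation $\fad(x)\fad(-x)=\re^{\pi\im x^2}\re^{\pi\im(1-2c_{\mathsf b}^2)/6}$ (or the relevant normalization from the Appendix) to convert $|\fad(\cdot)|^2=\fad(\cdot)\overline{\fad(\cdot)}=\fad(\cdot)\fad(-\bar\cdot)$ into a product of the form $\fad(x+\text{const})/\fad(x-\text{const})$ times a Gaussian. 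The upshot is that $\|\mypsi{a_1}{a_2}\|^2$ becomes a Gaussian-weighted ratio of two quantum dilogarithms, i.e. exactly the integrand of the Fourier-transform (beta-integral) formula for $\fad$.

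The key external input is the Fourier-transform formula for the quantum dilogarithm, which in one standard normalization reads
\begin{equation}
\int_{\mathbb{R}}\frac{\fad(x+u)}{\fad(x+v)}\,\re^{2\pi\im wx}\,\rd x=\frac{\fad(u-v-c_{\mathsf b})\fad(-w+c_{\mathsf b})}{\fad(u-v-w-c_{\mathsf b})}\,\re^{-2\pi\im w(v+c_{\mathsf b})}
\end{equation}
(the precise constants and the exact placement of $c_{\mathsf b}$ are as recorded in the Appendix). Matching the parameters $u,v,w$ to the combinations $a_1,a_2,a_3=\myh-a_1-a_2$ coming from the shift above, and simplifying the resulting product of three $\fad$-values using the inversion relation, should collapse everything to $\prod_{j=1}^3\fad(c_{\mathsf b}-2\im a_j)$ up to a phase; taking absolute values kills the phase and the Gaussian prefactors (which have unit modulus on $\mathbb{R}$ after the shift is accounted for), giving \eqref{norm-psi}. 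One should also check positivity/reality: the left side is manifestly $\geq 0$, and the right side is a product of moduli, so the two sides can only agree up to the phase being trivial, which is a useful consistency check on the bookkeeping.

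The main obstacle is purely one of normalization bookkeeping: Faddeev's quantum dilogarithm appears in the literature with several different conventions (the $\fad$ vs.\ $\re_{\mathsf b}$ vs.\ $\Phi$ normalizations, differing by Gaussian factors and by the sign in the argument), and the beta-integral formula, the inversion relation, and the asymptotics must all be used in the convention fixed in the Appendix. Getting the contour shift legitimate requires knowing precisely where the nearest pole/zero of $\fad$ sits relative to the strip $0\le\operatorname{Im}x\le a_1+a_2$, which is where the constraint $(a_1,a_2)\in\myset$ (equivalently $a_3\in(0,\myh)$, so all three $c_{\mathsf b}-2\im a_j$ lie in the analyticity strip of $\fad$) is essential. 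Once the conventions are pinned down, the rest is a routine substitution and simplification.
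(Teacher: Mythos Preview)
Your proposal has the right core ingredient---the quantum-dilogarithm beta/Ramanujan integral---and so is essentially the paper's approach. But the route you sketch is more circuitous than needed, and one step is out of order.

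The paper proceeds more directly. Starting from the \emph{definition} $\mypsi{a_1}{a_2}(t)=\re^{2\pi a_1 t}/\fad(t-\im(a_1+a_2))$ and applying the unitarity relation $\overline{\fad(z)}=1/\fad(\bar z)$ (not the inversion relation) immediately gives
\[
|\mypsi{a_1}{a_2}(t)|^2=\frac{\fad(t+\im(a_1+a_2))}{\fad(t-\im(a_1+a_2))}\,\re^{4\pi a_1 t},
\]
which is already in Ramanujan form with a real contour---no shift required. One then applies the Ramanujan formula, written in the paper via the normalized dilogarithm $\fadn(z)=\fad(z)\fad(0)^{-1}\re^{-\pi\im z^2/2}$, and uses $a_1+a_2=\myh-a_3$ to get the symmetric product $\prod_j\fadn(c_{\mathsf b}-2\im a_j)$; finally $|\fad(z)|=|\fadn(z)|$ on the imaginary axis gives \eqref{norm-psi}.

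Two points on your version. First, the identity $\overline{\fad(z)}=\fad(-\bar z)$ is not correct as stated; the unitarity relation is $\overline{\fad(z)}=1/\fad(\bar z)$, and this already produces a \emph{ratio} $\fad(t+\im\alpha)/\fad(t-\im\alpha)$ with no Gaussian. Only if you additionally apply the inversion relation do Gaussian factors appear---but that detour is unnecessary here. Second, the contour shift you propose cannot be performed directly on $\int_{\mathbb R}|\fad(-x+\im\alpha)|^2\re^{-4\pi a_2 x}\,\rd x$, since $|\fad(\cdot)|^2$ is not holomorphic; you must first convert to the meromorphic ratio (via unitarity) and only then shift. Once you do that conversion, the integral is already over $\mathbb R$ in exactly the form the Ramanujan formula wants (after the harmless substitution $x\mapsto -t$), so no shift is needed at all.
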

\begin{proof}
The formula follows  from the integral Ramanujan formula
 \be
\varphi(z,w)\equiv\int_{\mathbb{R}}\frac{\fad\!\left(t+\frac z 2\right)}{\fad\!\left(t-\frac z 2\right)}\re^{2\pi \im tw}\operatorname{d}\!t 
=\fadn\!\left(z-c_{\mathsf{b}}\right)\fadn\!\left(w+c_{\mathsf{b}}\right)\fadn\!\left(-z-w+c_{\mathsf{b}}\right),
\ee
where the normalized quantum dilogarithm function
\begin{equation}\label{fadn}
\fadn(z)\equiv \frac{\fad(z)}{\fad(0)}\re^{-\pi\im z^2/2}
\end{equation}
has the properties
\begin{equation}
\fadn(z)\fadn(-z)=1, \quad \overline{\fadn(z)}=\fadn(-\bar z),\quad \frac{\fadn(z-\im\mathsf{b}/2)}{\fadn(z+\im\mathsf{b}/2)}=2\cosh(\pi\mathsf{b}z).
\end{equation}

 We have
\begin{multline}
\|\mypsi{a_1}{a_2}\|^2=\int_{\mathbb{R}}\frac{\fad(t+\im(a_1+a_2))}{\fad(t-\im(a_1+a_2))} \re^{4\pi a_1 t}\operatorname{d}\!t=\varphi(2\im(a_1+a_2),-2\im a_1)\\=\varphi(2\im(\myh-a_3),-2\im a_1)
=\fadn\!\left(c_{\mathsf{b}}-2\im a_3\right)\fadn\!\left(c_{\mathsf{b}}-2\im a_1\right)\fadn\!\left(c_{\mathsf{b}}-2\im a_2\right),
\end{multline}
which is exactly \eqref{norm-psi}, as we have the equality
\begin{equation}
|\fad(z)|=|\fadn(z)|\quad \text{if}\quad \Im(z)\Re(z)=0.
\end{equation}
\end{proof}
\begin{remark}\label{rem1}
More generally, we have the following Fourier transformation formula
 \begin{multline}
\int_{\mathbb{R}}|\mypsi{a_1}{a_2}(t)|^2\re^{2\pi\im xt}\operatorname{d}\!t
=\int_{\mathbb{R}}\frac{\fad(t+\im(a_1+a_2))}{\fad(t-\im(a_1+a_2))} \re^{2\pi\im t(x-2\im a_1)}\operatorname{d}\!t\\
=\varphi(2\im(a_1+a_2),x-2\im a_1)\\=
\fadn\!\left(2\im(a_1+a_2)-c_{\mathsf{b}}\right)\fadn\!\left(x-2\im a_1+c_{\mathsf{b}}\right)\fadn\!\left(-2\im(a_1+a_2)-x+2\im a_1+c_{\mathsf{b}}\right)\\
=
\fadn\!\left(c_{\mathsf{b}}-2\im a_3\right)\fadn\!\left(x-2\im a_1+c_{\mathsf{b}}\right)\fadn\!\left(-x-2\im a_2+c_{\mathsf{b}}\right)\\
=
\fadn\!\left(c_{\mathsf{b}}-2\im a_3\right)\frac{\fadn\!\left(x-2\im a_1+c_{\mathsf{b}}\right)}{\fadn\!\left(x+2\im a_2-c_{\mathsf{b}}\right)}.
\end{multline}

\end{remark}
\begin{lemma} \label{fad2} One has the following equalities
\begin{equation}
\fad(\mathsf{p})\re^{2\pi\mathsf{b}\mathsf{q}}\fadi(\mathsf{p})=\re^{2\pi\mathsf{b}\mathsf{q}}+\re^{2\pi\mathsf{b}(\mathsf{p}+\mathsf{q})},
\end{equation}

 \begin{equation}
\fadi(\mathsf{q})\fad(\mathsf{p})\re^{2\pi\mathsf{b}\mathsf{q}}\fadi(\mathsf{p})\fad(\mathsf{q})=\re^{2\pi\mathsf{b}\mathsf{q}} +\re^{2\pi\mathsf{b}(\mathsf{p}+\mathsf{q})} 
+\re^{2\pi\mathsf{b}(\mathsf{p}+2\mathsf{q})}
\end{equation}
\end{lemma}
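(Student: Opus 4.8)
The plan is to reduce both operator equalities to the first-order difference equation satisfied by Faddeev's quantum dilogarithm, combined with the Weyl relations coming from $[\mathsf{p},\mathsf{q}]=(2\pi\im)^{-1}$ and the fact that $\re^{2\pi\mathsf{b}\mathsf{q}}$ and $\re^{2\pi\mathsf{b}\mathsf{p}}$ act by imaginary translations. First I would assemble the ingredients. From the stated property $\fadn(z-\im\mathsf{b}/2)/\fadn(z+\im\mathsf{b}/2)=2\cosh(\pi\mathsf{b}z)$ and the definition $\fadn(z)=\fad(z)\fad(0)^{-1}\re^{-\pi\im z^2/2}$, a one-line Gaussian manipulation gives the multiplicative difference equation
\begin{equation}
\frac{\fad(z-\im\mathsf{b}/2)}{\fad(z+\im\mathsf{b}/2)}=1+\re^{2\pi\mathsf{b}z}
\end{equation}
as an identity of meromorphic functions. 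From Baker--Campbell--Hausdorff one gets $\re^{2\pi\mathsf{b}(\mathsf{p}+\mathsf{q})}=\re^{\pi\im\mathsf{b}^2}\re^{2\pi\mathsf{b}\mathsf{p}}\re^{2\pi\mathsf{b}\mathsf{q}}$, $\re^{2\pi\mathsf{b}\mathsf{q}}\re^{2\pi\mathsf{b}\mathsf{p}}=\re^{2\pi\im\mathsf{b}^2}\re^{2\pi\mathsf{b}\mathsf{p}}\re^{2\pi\mathsf{b}\mathsf{q}}$ and $\re^{2\pi\mathsf{b}\mathsf{q}}\re^{2\pi\mathsf{b}\mathsf{p}}\re^{2\pi\mathsf{b}\mathsf{q}}=\re^{2\pi\mathsf{b}(\mathsf{p}+2\mathsf{q})}$; and from the coordinate/momentum realizations recalled above one gets the translation identities $\re^{2\pi\mathsf{b}\mathsf{q}}\,g(\mathsf{p})=g(\mathsf{p}+\im\mathsf{b})\,\re^{2\pi\mathsf{b}\mathsf{q}}$ and $\re^{2\pi\mathsf{b}\mathsf{p}}\,g(\mathsf{q})=g(\mathsf{q}-\im\mathsf{b})\,\re^{2\pi\mathsf{b}\mathsf{p}}$.

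For the first identity I would move $\re^{2\pi\mathsf{b}\mathsf{q}}$ to the right past $\fadi(\mathsf{p})$, obtaining
\begin{equation}
\fad(\mathsf{p})\,\re^{2\pi\mathsf{b}\mathsf{q}}\,\fadi(\mathsf{p})=\fad(\mathsf{p})\,\fadi(\mathsf{p}+\im\mathsf{b})\,\re^{2\pi\mathsf{b}\mathsf{q}}=\frac{\fad(\mathsf{p})}{\fad(\mathsf{p}+\im\mathsf{b})}\,\re^{2\pi\mathsf{b}\mathsf{q}},
\end{equation}
and then substitute $z=\mathsf{p}+\im\mathsf{b}/2$ into the difference equation, which turns the ratio into $1+\re^{2\pi\mathsf{b}(\mathsf{p}+\im\mathsf{b}/2)}=1+\re^{\pi\im\mathsf{b}^2}\re^{2\pi\mathsf{b}\mathsf{p}}$. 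The Weyl relation $\re^{2\pi\mathsf{b}(\mathsf{p}+\mathsf{q})}=\re^{\pi\im\mathsf{b}^2}\re^{2\pi\mathsf{b}\mathsf{p}}\re^{2\pi\mathsf{b}\mathsf{q}}$ then identifies the right-hand side as $\re^{2\pi\mathsf{b}\mathsf{q}}+\re^{2\pi\mathsf{b}(\mathsf{p}+\mathsf{q})}$.

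For the second identity I would substitute the first identity into the middle of the left-hand side; since $\fad(\mathsf{q})$ commutes with $\re^{2\pi\mathsf{b}\mathsf{q}}$, what remains to prove is $\fadi(\mathsf{q})\,\re^{2\pi\mathsf{b}(\mathsf{p}+\mathsf{q})}\,\fad(\mathsf{q})=\re^{2\pi\mathsf{b}(\mathsf{p}+\mathsf{q})}+\re^{2\pi\mathsf{b}(\mathsf{p}+2\mathsf{q})}$. Writing $\re^{2\pi\mathsf{b}(\mathsf{p}+\mathsf{q})}=\re^{\pi\im\mathsf{b}^2}\re^{2\pi\mathsf{b}\mathsf{p}}\re^{2\pi\mathsf{b}\mathsf{q}}$, pulling the factor $\re^{2\pi\mathsf{b}\mathsf{q}}$ out through $\fad(\mathsf{q})$ on the right, and moving $\re^{2\pi\mathsf{b}\mathsf{p}}$ past $\fad(\mathsf{q})$ via $\re^{2\pi\mathsf{b}\mathsf{p}}\fad(\mathsf{q})=\fad(\mathsf{q}-\im\mathsf{b})\re^{2\pi\mathsf{b}\mathsf{p}}$, the left-hand side becomes $\re^{\pi\im\mathsf{b}^2}\,\bigl(\fad(\mathsf{q}-\im\mathsf{b})/\fad(\mathsf{q})\bigr)\,\re^{2\pi\mathsf{b}\mathsf{p}}\re^{2\pi\mathsf{b}\mathsf{q}}$. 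Now the difference equation with $z=\mathsf{q}-\im\mathsf{b}/2$ gives $\fad(\mathsf{q}-\im\mathsf{b})/\fad(\mathsf{q})=1+\re^{-\pi\im\mathsf{b}^2}\re^{2\pi\mathsf{b}\mathsf{q}}$, and expanding the product and applying the Weyl relations above (in particular $\re^{2\pi\mathsf{b}\mathsf{q}}\re^{2\pi\mathsf{b}\mathsf{p}}\re^{2\pi\mathsf{b}\mathsf{q}}=\re^{2\pi\mathsf{b}(\mathsf{p}+2\mathsf{q})}$) produces exactly the two asserted summands.

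The only genuinely delicate point is that every operator above other than the unitaries $\fad(\mathsf{p})$ and $\fad(\mathsf{q})$ is unbounded, so the equalities must be interpreted on a common dense invariant domain on which the complex translations implemented by $\re^{2\pi\mathsf{b}\mathsf{p}}$, $\re^{2\pi\mathsf{b}\mathsf{q}}$ and the analytic continuations of $\fad$ entering the computation are legitimate; I expect this domain bookkeeping, rather than any individual algebraic step, to be the main obstacle. I would fix it by working on the space of vectors whose coordinate- (respectively momentum-) space wavefunctions extend holomorphically, with rapid decay, to a horizontal strip wide enough to absorb the shifts by $\pm\im\mathsf{b}$ and $\pm\im\mathsf{b}/2$ appearing above, using that $\fad$ is holomorphic and nowhere vanishing on the strip $|\Im z|<\myh$ (and that one may take $0<\mathsf{b}\le 1$ by the $\mathsf{b}\leftrightarrow\mathsf{b}^{-1}$ symmetry of $\fad$), and checking that each operator in the statement preserves this dense domain; there the manipulations are all justified. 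Alternatively one can run the whole argument at the level of integral kernels tested against Schwartz functions, in the spirit of the Ramanujan-type Fourier transform $\varphi(z,w)$ used in the previous proof.
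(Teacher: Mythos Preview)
Your proof is correct and follows essentially the same route as the paper: both reduce the identities to the difference equation $\fad(z-\im\mathsf{b}/2)/\fad(z+\im\mathsf{b}/2)=1+\re^{2\pi\mathsf{b}z}$ combined with the Weyl/BCH relations, and both derive the second equality from the first. The only cosmetic difference is that the paper splits $\re^{2\pi\mathsf{b}\mathsf{q}}=\re^{\pi\mathsf{b}\mathsf{q}}\re^{\pi\mathsf{b}\mathsf{q}}$ (and likewise $\re^{2\pi\mathsf{b}(\mathsf{p}+\mathsf{q})}$) symmetrically and conjugates half through each side, arriving directly at the ratio $\fad(\cdot-\im\mathsf{b}/2)/\fad(\cdot+\im\mathsf{b}/2)$ with no extra phase factors, whereas you push the full exponential through one side and then cancel the resulting $\re^{\pm\pi\im\mathsf{b}^2}$ phases via BCH; the paper's version is slightly tidier but the content is identical, and the paper does not address the domain issues you raise.
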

\begin{proof} We have
\begin{equation}
\fad(\mathsf{p})\re^{2\pi\mathsf{b}\mathsf{q}}\fadi(\mathsf{p})=
\re^{\pi\mathsf{b}\mathsf{q}}\frac{\fad(\mathsf{p}-\im\mathsf{b}/2)}{\fad(\mathsf{p}+\im\mathsf{b}/2)}\re^{\pi\mathsf{b}\mathsf{q}}=
\re^{\pi\mathsf{b}\mathsf{q}}\left(1+\re^{2\pi\mathsf{b}\mathsf{p}}\right)\re^{\pi\mathsf{b}\mathsf{q}}
  =\re^{2\pi\mathsf{b}\mathsf{q}}+\re^{2\pi\mathsf{b}(\mathsf{p}+\mathsf{q})}
\end{equation}
and, by using the previous formula,
\begin{multline}
 \fadi(\mathsf{q})\fad(\mathsf{p})\re^{2\pi\mathsf{b}\mathsf{q}}\fadi(\mathsf{p})\fad(\mathsf{q})=
  \fadi(\mathsf{q})\left(\re^{2\pi\mathsf{b}\mathsf{q}}+\re^{2\pi\mathsf{b}(\mathsf{p}+\mathsf{q})}\right)\fad(\mathsf{q})\\
  =
\re^{2\pi\mathsf{b}\mathsf{q}}+  \re^{\pi\mathsf{b}(\mathsf{p}+\mathsf{q})}\frac{\fad(\mathsf{q}-\im\mathsf{b}/2)}{\fad(\mathsf{q}+\im\mathsf{b}/2)}\re^{\pi\mathsf{b}(\mathsf{p}+\mathsf{q})}
 =
\re^{2\pi\mathsf{b}\mathsf{q}}+  \re^{\pi\mathsf{b}(\mathsf{p}+\mathsf{q})}\left(1+\re^{2\pi\mathsf{b}\mathsf{q}}\right)\re^{\pi\mathsf{b}(\mathsf{p}+\mathsf{q})}
\\
=\re^{2\pi\mathsf{b}\mathsf{q}} +\re^{2\pi\mathsf{b}(\mathsf{p}+\mathsf{q})} 
+\re^{2\pi\mathsf{b}(\mathsf{p}+2\mathsf{q})}.
\end{multline}
\end{proof}
 
 \subsection{Generic three-term operator}
Consider the operator associated to the function (\ref{lpnm}):
 \begin{equation}
\mathsf{O}_{m,n}=\re^{\mathsf{x}}+\re^{\mathsf{y}} +\re^{-m\mathsf{x}-n\mathsf{y}},\quad m,n\in\mathbb{R}_{>0}.
\end{equation}
Note that $m, n$ can be {\it a priori} arbitrary positive, real numbers, although in the operators arising from the mirror curves they are integers. By using Lemma~\ref{fad2} and the substitutions
\begin{equation}
\mathsf{x}\equiv 2\pi\mathsf{b}\frac{(n+1)\mathsf{p}+n\mathsf{q}}{m+n+1},\quad \mathsf{y}\equiv -2\pi\mathsf{b}\frac{m\mathsf{p}+(m+1)\mathsf{q}}{m+n+1}, 
\end{equation}
so that
\be
\label{b-hbar}
\hbar=\frac{2\pi\mathsf{b}^2}{m+n+1}, 
\ee
we have
 \begin{multline}
\re^{-\mathsf{y}/2} \mathsf{O}_{m,n}\re^{-\mathsf{y}/2}=\re^{\mathsf{x}-\mathsf{y}} +1+\re^{-m\mathsf{x}-(n+1)\mathsf{y}} 
=\re^{2\pi\mathsf{b}(\mathsf{p}+\mathsf{q})} +1+\re^{2\pi\mathsf{b}\mathsf{q}} 
\\
=1+\fad(\mathsf{p})\re^{2\pi\mathsf{b}\mathsf{q}}\fadi(\mathsf{p})
=\fad(\mathsf{p})\frac{\fad(\mathsf{q}-\im\mathsf{b}/2)}{\fad(\mathsf{q}+\im\mathsf{b}/2)}\fadi(\mathsf{p}).
\end{multline}
By defining an operator
\begin{equation}
\mathsf{A}_{m,n}\equiv \fadi(\mathsf{q}-\im\mathsf{b}/2)\fadi(\mathsf{p})\re^{\frac{\pi\mathsf{b}(m+1)}{m+n+1}\mathsf{q}}\re^{\frac{\pi\mathsf{b}m}{m+n+1}\mathsf{p}},
\end{equation}
we obtain the following formula for the inverse operator (\ref{rhoO}), 
\begin{equation}
\label{oaa}
\rho_{m,n}=\mathsf{A}_{m,n}^*\mathsf{A}_{m,n}.
\end{equation}
Let us now rewrite $\mathsf{A}_{m,n}$ in the form 
\be
\label{app}
 \mathsf{A}_{m,n}=\fadi\!\left(\mathsf{q}-\im\frac{\mathsf{b}}{2}\right)\re^{\frac{\pi\mathsf{b}(m+1)}{m+n+1}\mathsf{q}}
 \fadi\!\left(\mathsf{p}-\im\frac{\mathsf{b}(m+1)}{2(m+n+1)}\right)\re^{\frac{\pi\mathsf{b}m}{m+n+1}\mathsf{p}}
 =\mypsi{a+c}{cn}(\mathsf{q})\mypsi{a}{c}(\mathsf{p}),
\ee
 where
 \be
 a\equiv\frac{\mathsf{b}m}{2(m+n+1)},\quad c\equiv\frac{\mathsf{b}}{2(m+n+1)}.
 \ee
 We see that $\mathsf{A}_{m,n}$ is a Hilbert--Schmidt operator due to the inclusions
 \begin{equation}
\left(\frac{\mathsf{b}(m+1)}{2(m+n+1)},\frac{\mathsf{b}n}{2(m+n+1)}\right)\in\myset\ni\left(\frac{\mathsf{b}m}{2(m+n+1)},\frac{\mathsf{b}}{2(m+n+1)}\right)
\end{equation}
and Lemma~\ref{hilb-schm}.
\begin{theorem} The operator $\rho_{m,n}$ is positive-definite and of trace class. 
\end{theorem}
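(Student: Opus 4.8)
The plan is to deduce both assertions directly from the factorization \eqref{oaa}, namely $\rho_{m,n}=\mathsf{A}_{m,n}^*\mathsf{A}_{m,n}$, together with the fact already established in the text that $\mathsf{A}_{m,n}$ is Hilbert--Schmidt. Indeed, once we know $\mathsf{A}_{m,n}$ is Hilbert--Schmidt, the composition $\mathsf{A}_{m,n}^*\mathsf{A}_{m,n}$ is automatically trace class, since the product of two Hilbert--Schmidt operators is trace class (with $\tr(\mathsf{A}_{m,n}^*\mathsf{A}_{m,n})=\|\mathsf{A}_{m,n}\|_{\mathrm{HS}}^2<\infty$). This gives the trace class statement with essentially no further work.

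For positive-definiteness, I would argue in two stages. First, $\rho_{m,n}=\mathsf{A}_{m,n}^*\mathsf{A}_{m,n}$ is manifestly non-negative: $\langle\psi,\rho_{m,n}\psi\rangle=\|\mathsf{A}_{m,n}\psi\|^2\ge 0$ for all $\psi$. To upgrade to strict positivity, note from \eqref{app} that $\mathsf{A}_{m,n}=\mypsi{a+c}{cn}(\mathsf{q})\,\mypsi{a}{c}(\mathsf{p})$, and that both $\mypsi{a+c}{cn}$ and $\mypsi{a}{c}$ are \emph{nowhere vanishing} Schwartz functions (this was recorded right after the definition \eqref{mypsi}). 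Hence $\mypsi{a}{c}(\mathsf{p})$ is injective: if $\mypsi{a}{c}(\mathsf{p})\psi=0$ then in the momentum representation $\mypsi{a}{c}(x)\,(x|\psi)=0$ a.e., forcing $(x|\psi)=0$ a.e., i.e. $\psi=0$. Likewise $\mypsi{a+c}{cn}(\mathsf{q})$ is injective in the coordinate representation. Therefore $\mathsf{A}_{m,n}$ is injective, so $\mathsf{A}_{m,n}\psi\ne 0$ whenever $\psi\ne 0$, which yields $\langle\psi,\rho_{m,n}\psi\rangle=\|\mathsf{A}_{m,n}\psi\|^2>0$ for all $\psi\ne 0$. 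That is the desired positive-definiteness.

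One point deserving a word of care is that $\rho_{m,n}$ was introduced as the \emph{inverse} $\mathsf{O}_{m,n}^{-1}$, so strictly speaking one should make sure the self-adjoint operator $\mathsf{O}_{m,n}$ (defined via Weyl quantization) is indeed invertible with the inverse given by \eqref{oaa}; but this identification is exactly what the computation preceding \eqref{oaa} supplies, using Lemma~\ref{fad2} and the conjugation $\re^{-\mathsf{y}/2}\mathsf{O}_{m,n}\re^{-\mathsf{y}/2}=\fad(\mathsf{p})\tfrac{\fad(\mathsf{q}-\im\mathsf{b}/2)}{\fad(\mathsf{q}+\im\mathsf{b}/2)}\fadi(\mathsf{p})$, whose inverse is a bounded operator of the stated form. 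So we may simply take \eqref{oaa} as given. The only genuinely nontrivial input is the Hilbert--Schmidt property of $\mathsf{A}_{m,n}$, and that has already been reduced in the text to checking the two membership conditions in $\myset$ plus Lemma~\ref{hilb-schm}; I expect no real obstacle beyond carefully invoking these facts in the right order.

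\begin{proof}
By \eqref{oaa} we have $\rho_{m,n}=\mathsf{A}_{m,n}^*\mathsf{A}_{m,n}$, where, as established above using \eqref{app} and Lemma~\ref{hilb-schm}, the operator $\mathsf{A}_{m,n}$ is Hilbert--Schmidt. Since the product of two Hilbert--Schmidt operators is of trace class, $\rho_{m,n}=\mathsf{A}_{m,n}^*\mathsf{A}_{m,n}$ is of trace class, with $\operatorname{Tr}\rho_{m,n}=\|\mathsf{A}_{m,n}\|_{\mathrm{HS}}^2<\infty$.

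For positivity, for any $\psi\in L^2(\mathbb{R})$ we have
\begin{equation}
\langle\psi,\rho_{m,n}\psi\rangle=\langle\psi,\mathsf{A}_{m,n}^*\mathsf{A}_{m,n}\psi\rangle=\|\mathsf{A}_{m,n}\psi\|^2\geq 0,
\end{equation}
so $\rho_{m,n}\geq 0$. It remains to show $\mathsf{A}_{m,n}$ is injective. By \eqref{app}, $\mathsf{A}_{m,n}=\mypsi{a+c}{cn}(\mathsf{q})\,\mypsi{a}{c}(\mathsf{p})$. Suppose $\mathsf{A}_{m,n}\psi=0$. In the momentum representation, $\mypsi{a}{c}(\mathsf{p})\psi$ has wavefunction $\mypsi{a}{c}(x)\,(x|\psi)$, and applying $\mypsi{a+c}{cn}(\mathsf{q})$ and passing to the coordinate representation shows $\mypsi{a+c}{cn}(x)\cdot\big(\text{wavefunction of }\mypsi{a}{c}(\mathsf{p})\psi\text{ at }x\big)=0$ for almost every $x$. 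Since $\mypsi{a+c}{cn}$ is nowhere vanishing, $\mypsi{a}{c}(\mathsf{p})\psi=0$; then since $\mypsi{a}{c}$ is nowhere vanishing, $\mypsi{a}{c}(x)\,(x|\psi)=0$ a.e.\ forces $(x|\psi)=0$ a.e., i.e.\ $\psi=0$. Hence $\mathsf{A}_{m,n}$ is injective, so $\|\mathsf{A}_{m,n}\psi\|^2>0$ whenever $\psi\neq 0$, and therefore $\rho_{m,n}$ is positive-definite.
\end{proof}
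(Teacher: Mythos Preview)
Your proof is correct and follows essentially the same approach as the paper: use the factorization \eqref{oaa} together with the Hilbert--Schmidt property of $\mathsf{A}_{m,n}$ to get trace class, and use $\langle\psi,\rho_{m,n}\psi\rangle=\|\mathsf{A}_{m,n}\psi\|^2$ together with injectivity of $\mathsf{A}_{m,n}$ to get positive-definiteness. The only difference is cosmetic: the paper simply asserts that $\mathsf{A}_{m,n}$ is ``invertible'' (meaning injective, since a compact operator on $L^2(\mathbb{R})$ cannot have a bounded inverse), whereas you spell out the injectivity argument explicitly via the nowhere-vanishing property of the functions $\mypsi{a+c}{cn}$ and $\mypsi{a}{c}$.
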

\begin{proof}
Due to (\ref{oaa}), we have that, for any $f\in L^2(\mathbb{R})$, 
\be
\int_{\mathbb{R}}f^*(x) (\rho_{m,n} f)(x) \rd x = \Vert \mathsf{A}_{m,n} f\Vert^2, 
\ee
and since $\mathsf{A}_{m,n}$ is invertible, we conclude that $\rho_{m,n}$ is positive-definite. Since the product of two Hilbert--Schmidt operators is trace class, and $\mathsf{A}_{m,n}$ is 
Hilbert--Schmidt, we also conclude that $\rho_{m,n}$ is of trace class. 
\end{proof}

\begin{proposition}
The integral kernel of $\rho_{m,n}$ in the momentum representation is given by the formula
\begin{equation}
\label{ex-k}
(x|\rho_{m,n} |y)=\frac{\overline{\mypsi{a}{c}(x)}\mypsi{a}{c}(y)}{2\mathsf{b}\cosh\left(\pi\frac{x-y+\im (a+c-nc)}{\mathsf{b}}\right)}.
\end{equation}
\end{proposition}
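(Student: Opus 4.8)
The proposition asks for the integral kernel of $\rho_{m,n}$ in the momentum representation. The natural starting point is the factorization $\rho_{m,n}=\mathsf{A}_{m,n}^*\mathsf{A}_{m,n}$ from \eqref{oaa}, combined with the explicit form \eqref{app}, namely $\mathsf{A}_{m,n}=\mypsi{a+c}{cn}(\mathsf{q})\mypsi{a}{c}(\mathsf{p})$. First I would write out the kernel of $\mathsf{A}_{m,n}$ in the mixed basis, using the coordinate/momentum realizations listed in the excerpt: since $\mathsf{q}$ acts by multiplication in the coordinate representation and $\mathsf{p}$ by multiplication in the momentum representation, one gets $\langle z|\mathsf{A}_{m,n}|y)=\mypsi{a+c}{cn}(z)\,\mypsi{a}{c}(y)\,\langle z|y)=\mypsi{a+c}{cn}(z)\,\mypsi{a}{c}(y)\,\re^{2\pi\im zy}$. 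Similarly $(x|\mathsf{A}_{m,n}^*|z\rangle=\overline{\langle z|\mathsf{A}_{m,n}|x)}=\overline{\mypsi{a+c}{cn}(z)}\,\overline{\mypsi{a}{c}(x)}\,\re^{-2\pi\im zx}$.

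**Assembling the composition.** Inserting a resolution of the identity in the coordinate representation, $\mathbf{1}=\int_{\mathbb{R}}|z\rangle\operatorname{d}\!z\langle z|$, gives
\begin{equation}
(x|\rho_{m,n}|y)=\int_{\mathbb{R}}(x|\mathsf{A}_{m,n}^*|z\rangle\langle z|\mathsf{A}_{m,n}|y)\operatorname{d}\!z
=\overline{\mypsi{a}{c}(x)}\,\mypsi{a}{c}(y)\int_{\mathbb{R}}|\mypsi{a+c}{cn}(z)|^2\,\re^{2\pi\im z(y-x)}\operatorname{d}\!z.
\end{equation}
So the whole problem reduces to computing the Fourier transform of $|\mypsi{a+c}{cn}(z)|^2$. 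This is exactly the content of Remark~\ref{rem1}: setting there $a_1=a+c$, $a_2=cn$, and the Fourier variable equal to $y-x$, we obtain a product of three $\fadn$-factors, or in the compact last form of the remark, $\fadn(c_{\mathsf{b}}-2\im a_3)\,\fadn(x'-2\im a_1+c_{\mathsf{b}})/\fadn(x'+2\im a_2-c_{\mathsf{b}})$ with $x'=y-x$ and $a_3=\myh-a_1-a_2$. The ratio $\fadn(\,\cdot\,-\im\mathsf{b}/2)/\fadn(\,\cdot\,+\im\mathsf{b}/2)=2\cosh(\pi\mathsf{b}\,\cdot\,)$ quoted in the earlier lemma is what collapses two of these factors into the single hyperbolic cosine in the denominator of \eqref{ex-k}.

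**The bookkeeping step, which is the main obstacle.** The nontrivial part is purely a matter of matching arguments: one must verify that with $a=\mathsf{b}m/(2(m+n+1))$, $c=\mathsf{b}/(2(m+n+1))$, the combination $2(a+c)+2cn$ lines up so that the two surviving $\fadn$ arguments in Remark~\ref{rem1} differ precisely by $\im\mathsf{b}$, producing $2\mathsf{b}\cosh(\pi(x-y+\im(a+c-nc))/\mathsf{b})$ after pulling out the scaling by $\mathsf{b}$ (note $\fadn$ depends on $\mathsf{b}$, and rescaling $z\to z/\mathsf{b}$ or using the self-duality is what converts $\cosh(\pi\mathsf{b}\,\cdot\,)$ into $\cosh(\pi\,\cdot\,/\mathsf{b})$). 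Concretely: $2a_1=2(a+c)$ and $2a_2=2cn$, so $x'-2\im a_1+c_{\mathsf{b}}$ and $-(x'+2\im a_2-c_{\mathsf{b}})$ have arguments summing to $-2\im(a_1+a_2)+2c_{\mathsf{b}}$; one checks $a_1+a_2=(a+c)+cn=c(m+1)/1\cdot\ldots$ — the precise identity to confirm is that $2(a_1+a_2)-\myh$ equals $-\mathsf{b}/2$ up to the right sign, i.e. that the two cosh-producing arguments are $c_{\mathsf{b}}$-shifted by exactly $\pm\im\mathsf{b}/2$, so that their ratio is $2\cosh$ of the average, which is $\pi(x'+\im(\ldots))/\mathsf{b}$ after the duality rescaling. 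The remaining prefactor $\fadn(c_{\mathsf{b}}-2\im a_3)$ together with the normalization constants $\fadn(0)$-type factors must be shown to cancel or to be absorbed, leaving exactly the clean expression \eqref{ex-k}; I expect this to require one application of the inversion relation $\fadn(z)\fadn(-z)=1$ and careful tracking of the $\re^{-\pi\im z^2/2}$ factors relating $\fad$ and $\fadn$. Everything else is mechanical.
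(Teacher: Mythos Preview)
Your overall strategy --- factorize via \eqref{oaa} and \eqref{app}, sandwich $\left|\mypsi{a+c}{cn}(\mathsf{q})\right|^2$ between momentum eigenstates, and reduce to a Fourier transform --- is exactly the paper's. The difference is in how that Fourier transform is evaluated.

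The paper does \emph{not} go through Remark~\ref{rem1}. Instead it observes that for the specific parameters $(a_1,a_2)=(a+c,cn)$ one has $a_1+a_2=\mathsf{b}/2$, so the argument shift in $\mypsi{a+c}{cn}$ is exactly $\im\mathsf{b}/2$ and the functional equation $\fad(x-\im\mathsf{b}/2)/\fad(x+\im\mathsf{b}/2)=1+\re^{2\pi\mathsf{b}x}$ collapses $\left|\mypsi{a+c}{cn}(x)\right|^2$ to the elementary function $\re^{2\pi hx}/(2\cosh(\pi\mathsf{b}x))$. The Fourier transform of $1/\cosh$ is then immediate, and a contour shift produces the $1/\mathsf{b}$ and the argument $(x-y+\im h)/\mathsf{b}$ in one stroke. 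This elementary form is also reused later for the spectral-trace computations.

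Your route via Remark~\ref{rem1} can be made to work, but your bookkeeping sketch contains two concrete errors. First, with $a_1=a+c$ and $a_2=cn$ the two $\fadn$ arguments in Remark~\ref{rem1} differ by $-2\im(a_1+a_2)+2c_{\mathsf{b}}=\im\mathsf{b}^{-1}$, \emph{not} $\im\mathsf{b}$; you therefore need the dual identity $\fadn(z-\im\mathsf{b}^{-1}/2)/\fadn(z+\im\mathsf{b}^{-1}/2)=2\cosh(\pi z/\mathsf{b})$, and no ``rescaling'' or appeal to self-duality of the variable is involved. Second, the leftover prefactor $\fadn(c_{\mathsf{b}}-2\im a_3)$ with $a_3=\myh-\mathsf{b}/2=\mathsf{b}^{-1}/2$ does not cancel: it equals $\fadn\!\left(\im(\mathsf{b}-\mathsf{b}^{-1})/2\right)=1/\mathsf{b}$, a special value you would have to establish separately. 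Once these two points are fixed your argument goes through, but the paper's direct simplification avoids both of them.
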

\begin{proof} From (\ref{oaa}) and (\ref{app}) we obtain
\be
\rho_{m,n}=\mypsi{a}{c}^*(\mathsf{p})\mypsi{a+c}{cn}^*(\mathsf{q}) \mypsi{a+c}{cn}(\mathsf{q})\mypsi{a}{c}(\mathsf{p})
=\mypsi{a}{c}^*(\mathsf{p})\left|\mypsi{a+c}{cn}(\mathsf{q})\right|^2\mypsi{a}{c}(\mathsf{p}).
\ee
By using the difference functional equation for the quantum dilogarithm and the Fourier integral
\begin{equation}
\frac1{\cosh(\pi x)}= \int_{\mathbb{R}}\frac{\re^{2\pi\im xy}}{\cosh(\pi y)}\operatorname{d}\!y,
\end{equation}
 and denoting 
 \begin{equation}\label{heq}
h\equiv a+c-nc,
\end{equation}
 we have
\begin{multline}
\label{psi-cosh}
 \left|\mypsi{a+c}{cn}(x)\right|^2=\left|\frac{\re^{2\pi (a+c)x}}{\fad(x-\im \mathsf{b}/2)}\right|^2=
 \frac{\fad(x+\im \mathsf{b}/2)}{\fad(x-\im  \mathsf{b}/2)}\re^{4\pi (a+c)x}\\
 =\frac{\re^{4\pi (a+c)x}}{1+\re^{2\pi \mathsf{b} x}}=\frac{\re^{2\pi hx}}{2\cosh(\pi \mathsf{b} x)}=
 \int_{\mathbb{R}}\frac{\re^{2\pi hx} \re^{2\pi\im \mathsf{b} xy}}{2\cosh(\pi y)}\operatorname{d}\!y
 =
 \int_{\mathbb{R}}\frac{\re^{2\pi\im \mathsf{b} x(y-\im h/\mathsf{b}) }}{2\cosh(\pi y)}\operatorname{d}\!y\\
  =
 \int_{\mathbb{R}-\im h/\mathsf{b}}\frac{\re^{2\pi\im \mathsf{b} xz }}{2\cosh(\pi (z+\im h/\mathsf{b}))}\operatorname{d}\!z
 =
 \int_{\mathbb{R}}\frac{\re^{2\pi\im  xt }}{2\mathsf{b}\cosh(\pi (t+\im h)/\mathsf{b}))}\operatorname{d}\!t,
\end{multline}
where in the last equality we have shifted the line  of integration by using analyticity of the integrand in the strip $|\Im z|\le |h|/\mathsf{b}$. Finally, we remark that if we have a Fourier integral of the form
\begin{equation}
f(x)= \int_{\mathbb{R}}\tilde f(t)\re^{2\pi\im  xt }\operatorname{d}\!t,
\end{equation}
then we have
\begin{multline}
 (x|f(\mathsf{q})|y)= \int_{\mathbb{R}}\tilde f(t)(x|\re^{2\pi\im  \mathsf{q}t}|y)\operatorname{d}\!t
 = \int_{\mathbb{R}}\tilde f(t)(x-t|y)\operatorname{d}\!t\\
 = \int_{\mathbb{R}}\tilde f(t)\delta(x-t-y)\operatorname{d}\!t=\tilde f(x-y),
\end{multline}
so that
\begin{multline}
(x|\rho_{m,n} |y)=(x|\mypsi{a}{c}^*(\mathsf{p})\left|\mypsi{a+c}{cn}(\mathsf{q})\right|^2\mypsi{a}{c}(\mathsf{p})|y)\\
=\overline{\mypsi{a}{c}(x)}\left(x\left|\left|\mypsi{a+c}{cn}(\mathsf{q})\right|^2\right|y\right)\mypsi{a}{c}(y)
=\frac{\overline{\mypsi{a}{c}(x)}\mypsi{a}{c}(y)}{2\mathsf{b}\cosh\left(\pi\frac{x-y+\im h}{\mathsf{b}}\right)}.
\end{multline}

\end{proof}

\subsection{Local $\mathbb{F}_0$} The operator associated to the mirror curve of local $\IF_0$ is obtained by quantization of the function $\CO_{\mathbb{F}_0}$ appearing 
in table \ref{table-ops}, and it reads 
 \begin{equation}
\mathsf{O}_{\mathbb{F}_0}=\re^{\mathsf{x}} +\zeta \re^{-\mathsf{x}} +\re^{\mathsf{y}} +\re^{-\mathsf{y}}.
\end{equation}
This operator can not be regarded as a perturbation of a three-term operator of the form $\mathsf{O}_{m,n}$: when $\zeta=0$, the resulting operator is precisely the one associated 
to geodesic lenghts in quantum Teichm\"uller theory \cite{K1,K2,FC}, and it has a continuous spectrum \cite{K3,K4,FT}. We then assume $\zeta>0$. We will now show that, when this is the case, 
the inverse operator $\rho_{\mathbb{F}_0}=\mathsf{O}^{-1}_{\mathbb{F}_0}$ exists 
and is of trace class. Let us set 
\begin{equation}
\mathsf{x}=\pi\mathsf{b}(\mathsf{p}+2\mathsf{q}),\quad \mathsf{y}=\pi\mathsf{b}\mathsf{p},
\end{equation}
so that
\be
\hbar = \pi \mathsf{b}^2. 
\ee
By using Lemma~\ref{fad2} we have
 \begin{multline}
\re^{\mathsf{x}/2} \mathsf{O}_{\mathbb{F}_0}\re^{\mathsf{x}/2}-\zeta =\re^{2\mathsf{x}} +\re^{\mathsf{x}+\mathsf{y}} 
+\re^{\mathsf{x}-\mathsf{y}}=\re^{2\pi\mathsf{b}(\mathsf{p}+2\mathsf{q})} +\re^{2\pi\mathsf{b}(\mathsf{p}+\mathsf{q})} 
+\re^{2\pi\mathsf{b}\mathsf{q}}\\
=\fadi(\mathsf{q})\fad(\mathsf{p})\re^{2\pi\mathsf{b}\mathsf{q}}\fadi(\mathsf{p})\fad(\mathsf{q}).
\end{multline}
Thus,
\begin{multline}
\fadi(\mathsf{p})\fad(\mathsf{q})\re^{\mathsf{x}/2} \mathsf{O}_{\mathbb{F}_0}\re^{\mathsf{x}/2}\fadi(\mathsf{q})\fad(\mathsf{p})=\zeta+\re^{2\pi\mathsf{b}\mathsf{q}}=\zeta 
\left(1+\re^{2\pi\mathsf{b}(\mathsf{q}-\mu)}\right)\\
=\zeta \frac{\fad(\mathsf{q}-\mu-\im\mathsf{b}/2)}{\fad(\mathsf{q}-\mu+\im\mathsf{b}/2)},
\end{multline}
where we have introduced a new parameter $\mu$ through the equation
\begin{equation}
\zeta=\re^{2\pi\mathsf{b}\mu}.
\end{equation}
By defining an operator
\begin{equation}
\mathsf{B}\equiv \fadi(\mathsf{q}-\mu-\im\mathsf{b}/2)\fadi(\mathsf{p})\fad(\mathsf{q})\re^{\pi\mathsf{b}\mathsf{p}/2}\re^{\pi\mathsf{b}\mathsf{q}},
\end{equation}
we obtain the following formula
\begin{equation}
\zeta \mathsf{O}^{-1}_{\mathbb{F}_0}=\mathsf{B}^*\mathsf{B}.
\end{equation}
By rewriting $\mathsf{B}$ in the form 
\begin{multline}
 \mathsf{B}=\fadi(\mathsf{q}-\mu-\im\mathsf{b}/2)\fadi(\mathsf{p})\re^{\pi\mathsf{b}\mathsf{p}/2}\fad(\mathsf{q}+\im\mathsf{b}/4)\re^{\pi\mathsf{b}\mathsf{q}}\\
 =\fadi(\mathsf{q}-\mu-\im\mathsf{b}/2)\re^{2\pi\lambda\mathsf{q}}\fadi(\mathsf{p}-\im\lambda)\re^{\pi\mathsf{b}(\mathsf{p}-\im\lambda)/2}\fad(\mathsf{q}+\im\mathsf{b}/4)\re^{\pi(\mathsf{b}-2\lambda)
 \mathsf{q}}\\ 
 =\re^{\pi\lambda(2\mu-\im\mathsf{b}/2)}\mypsi{\lambda}{\mathsf{b}/2-\lambda}(\mathsf{q}-\mu)
 \mypsi{\mathsf{b}/4}{\lambda-\mathsf{b}/4}(\mathsf{p})
 \mypsi{\mathsf{b}/2-\lambda}{\lambda-\mathsf{b}/4}^*(\mathsf{q}),
 \end{multline}
 we see that, for any $\lambda$ satisfying the inequalities
 \begin{equation}
\mathsf{b}/4<\lambda<\mathsf{b}/2,
\end{equation}
$\mathsf{B}$ is a Hilbert--Schmidt operator by Lemma~\ref{hilb-schm} and the fact that Hilbert--Schmidt operators form a two-sided ideal in the algebra 
of bounded operators. It follows that 
$\rho_{\IF_0}$ is positive-definite and of trace class. 
\subsection{Perturbed operators} With the exception of the operator for local $\IF_0$ that we have just considered, 
the operators appearing in Table \ref{table-ops} are perturbations of the operators $\mathsf{O}_{m,n}$.  
Let us then consider an operator of the form, 
 \begin{equation}
\mathsf{O}=\mathsf{V} +\mathsf{A}^{-1}(\mathsf{A}^*)^{-1},
\end{equation}
where $\mathsf{V}$ is a self-adjoint positive operator, and $\mathsf{A}$ is an invertible Hilbert--Schmidt operator.
We write
\begin{equation}
\mathsf{A}\mathsf{O}\mathsf{A}^*=\mathsf{A}\mathsf{V}\mathsf{A}^*+1,
\end{equation}
or equivalently
\begin{equation}
\mathsf{O}^{-1}=\mathsf{A}^*(\mathsf{A}\mathsf{V}\mathsf{A}^*+1)^{-1}\mathsf{A}=\mathsf{C}^*\mathsf{C},
\end{equation}
where
\begin{equation}
\mathsf{C}\equiv (\mathsf{A}\mathsf{V}\mathsf{A}^*+1)^{-1/2}\mathsf{A}
\end{equation}
 is evidently a Hilbert--Schmidt operator. It follows that $\mathsf{O}^{-1}$ is of trace class and positive-definite. We can now apply this general argument to 
 the operators $\rho_S= \mathsf{O}^{-1}_S$ obtained by quantizing the functions ${\mathcal O}_S$ in table \ref{table-ops}, when $S= {\mathcal B}_{n+1}$, $\IF_n$, $n=1,2$. 
 Provided the perturbation parameters $\zeta_i$ are positive, 
 the operators $\rho_S$ will be of trace class and positive-definite. This argument also applies to other operators associated to local del Pezzo threefolds. 

\subsection{Symmetries and equivalences}
\label{sym-eqs}
There are a number of unitary equivalences in the three parameter family of three-term operators 
\begin{equation}
\left\{\mathsf{O}_{m,n}(\hbar)=\re^\mathsf{x}+\re^\mathsf{y}+\re^{-m\mathsf{x}-n\mathsf{y}}\ \left\vert\ [\mathsf{x},\mathsf{y}]=\im\hbar,\ (m,n,\hbar)\in\mathbb{R}^3_{>0}\right.\right\}.
\end{equation}
 The first set of such equivalences is produced by the action of the order three cyclic group $\mathbb{Z}_3$ which cyclically permutes the three operators
\begin{equation}
\mathsf{x}\mapsto \mathsf{y}\mapsto -m\mathsf{x}-n\mathsf{y}\mapsto\mathsf{x}.
\end{equation}
The corresponding equivalences in the parameter space read
\begin{equation}
\label{z3-eq}
(m,n,\hbar)\sim (n/m,1/m,m\hbar)\sim (1/n,m/n,n\hbar).
\end{equation}

In the case of $m=n=1$, the above equivalence becomes a $\mathbb{Z}_3$ symmetry. To see this explicitly, let us define a unitary operator
 \begin{equation}
 \label{Vdef}
\mathsf{V}\equiv \re^{-\pi\im/3}\re^{2\pi\im\mathsf{p}^2}\re^{\pi\im\mathsf{q}^2}\re^{\pi\im\mathsf{p}^2}
=\re^{-\pi\im/12}\re^{\pi\im\mathsf{p}^2}\mathsf{F},
\end{equation}
where $\mathsf{F}$ is the Fourier operator defined by its 
integral kernels in the ``position'' or ``momentum'' representations
 \begin{equation}
\langle x \vert \mathsf{F}\vert y\rangle=\re^{2\pi\im xy}= (x \vert \mathsf{F}\vert y).
\end{equation}
$\mathsf{V}$ can also be written in other forms
\begin{equation}
\mathsf{V}=\left(\re^{-\pi\im/6}\re^{\pi\im\mathsf{p}^2}\re^{\pi\im\mathsf{q}^2}\right)^2=\re^{\pi\im/6}\re^{-\pi\im\mathsf{q}^2}\re^{-\pi\im\mathsf{p}^2}\mathsf{P},
\end{equation}
where $\mathsf{P}$ is the parity operator:
\begin{equation}
\langle x \vert \mathsf{P}\vert y\rangle=\delta(x+y)= (x \vert \mathsf{P}\vert y).
\end{equation}
Various integral kernels of $\mathsf{V}$ are as follows, 
 \begin{multline}
\langle x \vert \mathsf{V}\vert y\rangle=\re^{-\pi\im/12}\re^{\pi\im( 2x+y)y}= (y \vert \mathsf{V}\vert x),
\\
 \langle x \vert \mathsf{V}\vert y)=\re^{\pi\im/6}\re^{-\pi\im(x+y)^2},\quad
( x \vert \mathsf{V}\vert y\rangle=\re^{-\pi\im/12}\re^{\pi\im x^2}\delta(x-y).
\end{multline}
It is easily verified that $\mathsf{V}$ is of order three and commutes with $\mathsf{O}_{1,1}$. This means that $\rho_{1,1}$ and $\mathsf{V}$ can be diagonalized simultaneously. The $\mathbb{Z}_3$ symmetry of this operator reflects the corresponding symmetry of the underlying Calabi--Yau threefold, local $\IP^2$ (which is in fact the resolution of the 
orbifold $\mathbb{C}^3/\mathbb{Z}_3$).

In addition to the general equivalence (\ref{z3-eq}), some additional equivalences appear when the parameters are restricted by some relations.
\begin{proposition}
The relation in the parameter space 
 \begin{equation}\label{eqrel}
(m,n,\hbar)\sim ((m+n)^2/m,n/m,m\hbar)
\end{equation}
corresponds to a unitary equivalence of the associated operators if either
\begin{equation}\label {eN=11} 
m+n=1,
\end{equation}
or
 \begin{equation}\label{eN=-11}
m+n=2\pi/\hbar,
\end{equation}
or else
\begin{equation}\label{e=1}
m+n=N,\quad \hbar=2\pi M/N, 
\end{equation}
where $M,N\in\mathbb{Z}_{>1}$ with $\operatorname{gcd}(M,N)=1$.
\end{proposition}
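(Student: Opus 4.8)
The plan is to produce, in each of the three cases, an explicit unitary $\mathsf{U}$ (possibly preceded by an antiunitary, which is harmless: all operators in sight are self-adjoint with purely discrete spectrum, equivalently their inverses $\rho$ are trace class, so an antiunitary conjugation still realises a unitary equivalence) such that $\mathsf{U}\,\mathsf{O}_{m,n}(\hbar)\,\mathsf{U}^{*}$ is a positive multiple of $\mathsf{O}_{(m+n)^{2}/m,\,n/m}(m\hbar)$, the latter being realised on $L^{2}(\mathbb{R})$ via the Stone--von Neumann identification at Planck constant $m\hbar$. The building blocks are: (a) a linear change of the Heisenberg pair $\binom{\mathsf{x}}{\mathsf{y}}\mapsto T\binom{\mathsf{x}}{\mathsf{y}}$, $T\in\mathrm{GL}(2,\mathbb{R})$, which by Stone--von Neumann realises $(m,n,\hbar)\sim(m',n',|\det T|\hbar)$ whenever $T$ permutes the three exponent vectors $(1,0),(0,1),(-m,-n)$ among those of the target (this is the origin of (\ref{z3-eq})); (b) the antiunitary ``swap'' $\mathsf{x}\leftrightarrow\mathsf{y}$ — a metaplectic transformation composed with complex conjugation — which realises $(m,n,\hbar)\sim(n,m,\hbar)$; and (c) conjugation by $\fad$ or $\fadi$ of a linear form in $\mathsf{p},\mathsf{q}$, which by Lemma~\ref{fad2} splits one exponential into two, or merges two into one. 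A preliminary observation: writing $N\equiv m+n$, one reads off from (\ref{b-hbar}) that the target operator carries modular parameter $\mathsf{b}'=\sqrt{N}\,\mathsf{b}$.

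Case (\ref{eN=11}), $m+n=1$: here the equivalence is purely linear and no quantum dilogarithm is needed. Compose the $\mathbb{Z}_{3}$-move $(m,n,\hbar)\sim(n/m,1/m,m\hbar)$ of (\ref{z3-eq}) with the antiunitary swap $(n/m,1/m,m\hbar)\sim(1/m,n/m,m\hbar)$; since $m+n=1$ forces $(m+n)^{2}/m=1/m$, the composite is precisely (\ref{eqrel}).

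Cases (\ref{eN=-11}) and (\ref{e=1}): these are the genuinely new equivalences and they require the quantum dilogarithm. Starting from the normalisation used to obtain (\ref{app}) — a linear canonical transformation after which two of the three exponentials of $\mathsf{O}_{m,n}$ read $1+\re^{2\pi\mathsf{b}\mathsf{z}}$ for a suitable linear $\mathsf{z}$ — one conjugates by $\fad$ of a linear form to split the third exponential, performs a further linear canonical transformation, and re-merges a pair of exponentials by conjugating with $\fadi$; iterating this drives the vertex $(-m,-n)$ to $(-(m+n)^{2}/m,-n/m)$ and multiplies $\hbar$ by $m$. For the exponentials produced along the way to recombine into exactly three terms one needs a telescoping identity obtained by iterating the quasi-periodicity $\fadn(z-\im\mathsf{b}/2)/\fadn(z+\im\mathsf{b}/2)=2\cosh(\pi\mathsf{b}z)$ (or its companion $\fadn(z-\im\mathsf{b}^{-1}/2)/\fadn(z+\im\mathsf{b}^{-1}/2)=2\cosh(\pi\mathsf{b}^{-1}z)$, the $\mathsf{b}\leftrightarrow\mathsf{b}^{-1}$ self-duality of $\fad$), and this telescopes only when the length of the product is pinned by an integer: in case (\ref{eN=-11}) the relevant shift is $\im\mathsf{b}^{-1}$, and $m+n=2\pi/\hbar$ is exactly what makes the telescoping close using the $\mathsf{b}^{-1}$ functional equation; in case (\ref{e=1}), where $N=m+n$ is an integer and $\hbar=2\pi M/N$ (so that $\mathsf{b}^{2}=M(N+1)/N$ is rational), one instead uses a factorisation identity for $\fad$ available at rational $\mathsf{b}^{2}$ — equivalently an $N$-sheeted Bloch--Floquet decomposition of $L^{2}(\mathbb{R})$ — with telescoping length controlled by $M$ and $N$, consistently with $\mathsf{b}'=\sqrt{N}\,\mathsf{b}$.

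The main obstacle is the bookkeeping in cases (\ref{eN=-11}) and especially (\ref{e=1}): one must choose the linear canonical transformations so that the $\mathsf{O}_{\bullet,\bullet}$ form is preserved after each $\fad$-conjugation, keep track of the accumulated scalar in order to identify the proportionality constant, and verify that the telescoping/factorisation identities apply with exactly the arguments that arise — that is, that the commensurability hypotheses are not merely sufficient but match the length of the telescoping product on the nose. A possibly cleaner alternative is to compare the integral kernels directly: by (\ref{ex-k}), both $\rho_{m,n}(\hbar)$ and $\rho_{(m+n)^{2}/m,\,n/m}(m\hbar)$ have kernels assembled from a $\mypsi{a}{c}$ factor and a $\cosh$ factor, and one checks that a suitable unitary built from a rescaling of the line, multiplication by a phase, and a Fourier transform (using Remark~\ref{rem1} to evaluate the $|\mypsi{a}{c}|^{2}$ Fourier transforms) intertwines them — the functional identity for $\fad$ required to make the two kernels agree being again available exactly under (\ref{eN=11}), (\ref{eN=-11}), (\ref{e=1}).
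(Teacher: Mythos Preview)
Your shortcut for case (\ref{eN=11}) is pleasant and genuinely different from the paper's argument: combining the cyclic move (\ref{z3-eq}) with the swap $(a,b)\leftrightarrow(b,a)$ does hit the target $((m+n)^2/m,n/m,m\hbar)=(1/m,n/m,m\hbar)$ when $m+n=1$. The only quibble is that the swap is antisymplectic, so your composite is antiunitary rather than unitary; for self-adjoint operators with discrete simple spectrum this still forces unitary equivalence, but it is worth saying so explicitly. The paper instead treats all three cases by a \emph{single} unitary conjugation, so in case (\ref{eN=11}) it gets a genuine unitary directly.

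For cases (\ref{eN=-11}) and (\ref{e=1}) there is a real gap. You describe a procedure of splitting, linearly transforming, re-merging, and ``iterating'', together with unspecified ``telescoping'' and, in case (\ref{e=1}), a ``factorisation identity for $\fad$ at rational $\mathsf{b}^2$'' or a Bloch--Floquet decomposition. None of this is carried out, and it is not clear that the iteration you envisage actually closes after finitely many steps to yield exactly three exponentials with the stated exponents. The paper's mechanism is both simpler and sharper: using the parametrisation $\hbar=2\pi\mathsf{b}^2$, $\mathsf{x}=2\pi\mathsf{b}\mathsf{q}$, $\mathsf{y}=2\pi\mathsf{b}\mathsf{p}$, one conjugates \emph{once} by the unitary $\fad(\mathsf{p}-\mathsf{q})$. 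This collapses $\re^{\mathsf{x}}+\re^{\mathsf{y}}$ to a single term $\re^{2\pi\mathsf{b}\mathsf{q}}$, while the third term becomes
\[
\re^{-\pi\mathsf{b}(m\mathsf{q}+n\mathsf{p})}\,
\frac{\fad(\mathsf{p}-\mathsf{q}-\im\mathsf{b}(m+n)/2)}{\fad(\mathsf{p}-\mathsf{q}+\im\mathsf{b}(m+n)/2)}\,
\re^{-\pi\mathsf{b}(m\mathsf{q}+n\mathsf{p})}.
\]
Under each of the three hypotheses one has $m+n=\mathsf{b}^{\epsilon-1}N$ with $\epsilon\in\{\pm1\}$ and $N\in\mathbb{Z}_{>0}$, so the ratio of dilogarithms is the finite $q$-Pochhammer $\bigl(-q^{(1-N)/2}x;q\bigr)_N$ with $q=\re^{2\pi\im\mathsf{b}^{2\epsilon}}$ and $x=\re^{2\pi\mathsf{b}^{\epsilon}(\mathsf{p}-\mathsf{q})}$. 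The key algebraic fact --- which your sketch does not isolate --- is the identity
\[
\bigl(-q^{(1-N)/2}x;q\bigr)_N = x^{N}+1,
\]
valid precisely when $N=1$ (covering (\ref{eN=11}) via $\epsilon=1$ and (\ref{eN=-11}) via $\epsilon=-1$) or when $N>1$ and $q$ is a primitive $N$-th root of unity (covering (\ref{e=1}), where $\mathsf{b}^{2}=M/N$ with $\gcd(M,N)=1$). This single identity is exactly what reduces the conjugated operator back to three exponentials, and a final linear change of variables $\mathsf{x}'=\mathsf{x}$, $\mathsf{y}'=m\mathsf{y}-(2m+n)\mathsf{x}$ then reads off $m'=(m+n)^2/m$, $n'=n/m$, $\hbar'=m\hbar$. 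Your ``telescoping'' and ``Bloch--Floquet'' language is gesturing at the right phenomenon, but the $q$-Pochhammer collapse is the precise statement that makes the argument go through in one step; without it, you have not shown that the procedure terminates with the correct output.
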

\begin{proof}
The three cases correspond to the condition
\begin{equation}
m+n=\mathsf{b}^{\epsilon-1} N,\quad \epsilon\in\{\pm1\},\quad N\in\mathbb{Z}, 
\end{equation}
where $N=1$ in the first two cases with $\epsilon=1$ for \eqref{eN=11} and $\epsilon=-1$ for \eqref{eN=-11}, while in the case \eqref{e=1}, $N$ rests arbitrary but $\epsilon=1$.

 By using the parametrization 
 \begin{equation}
\hbar=2\pi\mathsf{b}^2,\quad \mathsf{x}=2\pi\mathsf{b}\mathsf{q},\quad  \mathsf{y}=2\pi\mathsf{b}\mathsf{p},
\end{equation}
we write
\begin{multline}
\re^{\mathsf{x}}+ \re^{\mathsf{y}}= \re^{\mathsf{x}/2}\left(1+ \re^{\mathsf{y}-\mathsf{x}}\right)\re^{\mathsf{x}/2}=
\re^{\pi\mathsf{b}\mathsf{q}}\frac{\fad\left(\mathsf{p}-\mathsf{q}-\im\mathsf{b}/2\right)}{\fad\left(\mathsf{p}-\mathsf{q}+\im\mathsf{b}/2\right)}\re^{\pi\mathsf{b}\mathsf{q}}\\
=\fad\left(\mathsf{p}-\mathsf{q}\right)\re^{2\pi\mathsf{b}\mathsf{q}}\fad\left(\mathsf{p}-\mathsf{q}\right)^{-1},
\end{multline}
so that 
\begin{multline}
\fad\left(\mathsf{p}-\mathsf{q}\right)^{-1}\mathsf{O}_{m,n}(\hbar)
\fad\left(\mathsf{p}-\mathsf{q}\right)-\re^{2\pi\mathsf{b}\mathsf{q}}\\=\fad\left(\mathsf{p}-\mathsf{q}\right)^{-1}\re^{-2\pi\mathsf{b}(m\mathsf{q}+n\mathsf{p})}
\fad\left(\mathsf{p}-\mathsf{q}\right)\\
=\re^{-\pi\mathsf{b}(m\mathsf{q}+n\mathsf{p})}\frac{\fad\left(\mathsf{p}-\mathsf{q}-\im\mathsf{b}(m+n)/2\right)}{\fad\left(\mathsf{p}-\mathsf{q}+\im\mathsf{b}(m+n)/2\right)}\re^{-\pi\mathsf{b}(m\mathsf{q}+n\mathsf{p})}\\=\re^{-\pi\mathsf{b}(m\mathsf{q}+n\mathsf{p})}\frac{\fad\left(\mathsf{p}-\mathsf{q}-\im\mathsf{b}^\epsilon N/2\right)}{\fad\left(\mathsf{p}-\mathsf{q}+\im\mathsf{b}^\epsilon N/2\right)}\re^{-\pi\mathsf{b}(m\mathsf{q}+n\mathsf{p})}\\=
\re^{-\pi\mathsf{b}(m\mathsf{q}+n\mathsf{p})}\left(-\re^{\pi\im\mathsf{b}^{2\epsilon}(1-N)}\re^{2\pi\mathsf{b}^{\epsilon} (\mathsf{p}-\mathsf{q})};\re^{2\pi\im\mathsf{b}^{2\epsilon}}\right)_N\re^{-\pi\mathsf{b}(m\mathsf{q}+n\mathsf{p})},
\end{multline}
where we use the notation
\begin{equation}
(x;q)_N\equiv\prod_{j=0}^{N-1}(1-xq^j),
\end{equation}
and the functional equations for the quantum dilogarithm
 \begin{equation}
\frac{\fad\left(x-\im\mathsf{b}^{\epsilon}N/2\right)}{\fad\left(x+\im\mathsf{b}^\epsilon N/2\right)}=\left(-\re^{\pi\im\mathsf{b}^{2\epsilon}(1-N)}\re^{2\pi\mathsf{b}^{\epsilon} x};\re^{2\pi\im\mathsf{b}^{2\epsilon}}\right)_N,\quad \forall N\in\mathbb{Z},\quad \epsilon\in\{\pm1\}.
\end{equation}
Now, we remark that
\begin{equation}
\left(-q^{(1-N)/2}x;q\right)_N=x^N+1,
\end{equation}
either if $N=1$ or else if $N>1$ and $q$ is a primitive $N$-th root of unity. The latter is the case if $\mathsf{b}^2=M/N$ with mutually prime integers $M$ and $N$.
Thus, we conclude that
\begin{multline}
\fad\left(\mathsf{p}-\mathsf{q}\right)^{-1}\mathsf{O}_{m,n}(\hbar)
\fad\left(\mathsf{p}-\mathsf{q}\right)\\=\re^{2\pi\mathsf{b}\mathsf{q}}+\re^{-\pi\mathsf{b}(m\mathsf{q}+n\mathsf{p})}\left(\re^{2\pi\mathsf{b}^{\epsilon} N(\mathsf{p}-\mathsf{q})}+1\right)\re^{-\pi\mathsf{b}(m\mathsf{q}+n\mathsf{p})}\\
=\re^{2\pi\mathsf{b}\mathsf{q}}+\re^{2\pi\mathsf{b}(\mathsf{b}^{\epsilon-1} N(\mathsf{p}-\mathsf{q})-m\mathsf{q}-n\mathsf{p})}+\re^{-2\pi\mathsf{b}(m\mathsf{q}+n\mathsf{p})}\\
=\re^{2\pi\mathsf{b}\mathsf{q}}+\re^{2\pi\mathsf{b}((m+n)(\mathsf{p}-\mathsf{q})-m\mathsf{q}-n\mathsf{p})}+\re^{-2\pi\mathsf{b}(m\mathsf{q}+n\mathsf{p})}\\=\re^{2\pi\mathsf{b}\mathsf{q}}+\re^{2\pi\mathsf{b}(m\mathsf{p}-(2m+n)\mathsf{q})}+\re^{-2\pi\mathsf{b}(m\mathsf{q}+n\mathsf{p})}=\re^{\mathsf{x}'}+ \re^{\mathsf{y'}}+\re^{-m'\mathsf{x}'-n'\mathsf{y}'},
\end{multline}
where
\begin{equation}
\mathsf{x}'\equiv 2\pi\mathsf{b}\mathsf{q}=\mathsf{x},\quad \mathsf{y}'\equiv 2\pi\mathsf{b}(m\mathsf{p}-(2m+n)\mathsf{q})=m\mathsf{y}-(2m+n)\mathsf{x}
\end{equation}
so that 
\begin{equation}
[\mathsf{x'},\mathsf{y'}]=\hbar',\quad \hbar'\equiv m\hbar,
\end{equation}
and 
\begin{equation}
m'\equiv (m+n)^2/m,\quad n'\equiv n/m.
\end{equation}

\end{proof}


\section{Calculation of spectral traces}

Since the operator $\rho_{m,n}$ is of trace class, all its spectral traces 
\be
\tr \rho_{m,n}^L, \qquad L \in \BZ_{>0}
\ee
exist. They can be computed by using the explicit expression for the kernel (\ref{ex-k}), which leads to an $L$-dimensional integral involving products of quantum 
dilogarithms. These integrals are then similar to state-integrals, and they can be analyzed with the same techniques.  
In this section we will show that one can write the $L$-th spectral trace as an $(L-1)$-dimensional integral. This leads to explicit expressions for the cases $L=1,2$, 
which can be evaluated in closed form in many cases. 

\subsection{General considerations} By using the cyclic property of the trace, one finds that
\be
\tr \rho_{m,n}^L=\operatorname{Tr}\left((f(\mathsf{q})g(\mathsf{p}))^L\right), 
\ee
where 
\be
\label{fg-ex}
f(x)= |\mypsi{a+c}{cn}(x)|^2= \frac {\re^{2\pi hx}}{2\cosh(\pi\mathsf{b} x)}, \qquad g(x)= |\mypsi{a}{c}(x)|^2, 
\ee
and we have used (\ref{psi-cosh}). We first establish a result which allows to compute this type of traces, for general $f(x)$, $g(x)$. 

\begin{lemma}
 Let $f,g\in L^1(\mathbb{R})$. Then
 \begin{multline}\label{trN}
\operatorname{Tr}\left((f(\mathsf{q})g(\mathsf{p}))^L\right)\\=
L\int_{\mathbb{R}^L}\operatorname{d}\! x_1\dots\operatorname{d}\! x_L \delta(x_1+\dots+x_L)J_f(x_1,\dots,x_L)\prod_{j=1}^L\tilde g(x_{j+1}-x_j), 
\end{multline}
where $x_{L+1}\equiv x_1$,
\begin{equation}\label{aver}
J_f(x_1,\dots,x_L)\equiv \int_{\mathbb{R}}f(x_1+t)\dots f(x_L+t)\operatorname{d}\! t, 
\end{equation}
and
\begin{equation}
\tilde g(x)\equiv \int_{\mathbb{R}}g(t)\re^{-2\pi\im xt}\operatorname{d}\! t.
\end{equation}
 \end{lemma}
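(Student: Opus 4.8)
The plan is to express the trace as a cyclic multiple integral over the integral kernel of $\mathsf{G}\equiv f(\mathsf{q})g(\mathsf{p})$, and then to factor out an overall translation by passing to a ``center of mass'' coordinate $t$ and ``relative'' coordinates $y_j$.

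First I would record the kernel of $\mathsf{G}$ in the position representation. Since $g(\mathsf{p})$ acts diagonally in the momentum representation, inserting ${\bf 1}=\int_{\mathbb{R}}|p)\operatorname{d}\!p\,(p|$ together with the Fourier kernel $\langle x|p)=\re^{2\pi\im xp}$ gives
\[
\langle x|\mathsf{G}|y\rangle=f(x)\int_{\mathbb{R}}\re^{2\pi\im xp}g(p)\re^{-2\pi\im yp}\operatorname{d}\!p=f(x)\,\tilde g(y-x).
\]
Iterating this kernel $L$ times and integrating over the diagonal then yields
\[
\operatorname{Tr}\bigl((f(\mathsf{q})g(\mathsf{p}))^L\bigr)=\int_{\mathbb{R}^L}\prod_{j=1}^L f(x_j)\,\tilde g(x_{j+1}-x_j)\,\operatorname{d}\!x_j,\qquad x_{L+1}\equiv x_1,
\]
the integral being absolutely convergent because $|\tilde g|\le\|g\|_{L^1}$ and $f\in L^1$, so that Fubini's theorem can be invoked freely in what follows.

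Next I would change variables by $x_j=y_j+t$, with $t\equiv\frac1L\sum_j x_j$ and $y_j\equiv x_j-t$, so that $\sum_j y_j=0$, every difference $x_{j+1}-x_j=y_{j+1}-y_j$ (including the cyclic one $x_1-x_L=y_1-y_L$) is left unchanged, and $f(x_j)=f(y_j+t)$. Viewing $(y_1,\dots,y_{L-1},t)$ as the free variables (with $y_L=-\sum_{j<L}y_j$), this transformation is linear with Jacobian determinant equal to $L$, i.e. $\operatorname{d}\!x_1\cdots\operatorname{d}\!x_L=L\,\operatorname{d}\!y_1\cdots\operatorname{d}\!y_{L-1}\operatorname{d}\!t$. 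Substituting, carrying out the $t$-integration to build $J_f(y_1,\dots,y_L)=\int_{\mathbb{R}}\prod_j f(y_j+t)\operatorname{d}\!t$, and rewriting the constraint $y_L=-\sum_{j<L}y_j$ as a factor $\delta(y_1+\cdots+y_L)$ to restore the symmetric $L$-fold integration, produces precisely \eqref{trN} after renaming $y_j$ back to $x_j$. In particular the prefactor $L$ in \eqref{trN} is exactly this Jacobian constant.

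The one substantive point is the Jacobian computation — verifying that the linear map $(x_1,\dots,x_L)\mapsto(y_1,\dots,y_{L-1},t)$ has determinant $1/L$ — which is where the factor $L$ originates; everything else is careful bookkeeping (in particular one must check that the cyclic difference $x_1-x_L$ transforms like the others, so that all $L$ factors $\tilde g$ survive the substitution) together with routine applications of Fubini's theorem. I do not expect any conceptual obstacle beyond this.
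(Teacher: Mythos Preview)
Your proposal is correct and follows essentially the same route as the paper: compute the position-space kernel $\langle x|f(\mathsf{q})g(\mathsf{p})|y\rangle=f(x)\tilde g(y-x)$, write the trace as the cyclic $L$-fold integral, and then separate off an overall translation. The only cosmetic difference is that the paper obtains the factor $L$ by inserting the identity $1=L\int_{\mathbb{R}}\delta(x_1+\dots+x_L-Lt)\,\operatorname{d}\!t$ and then shifting $x_j\to x_j+t$, rather than by your explicit center-of-mass change of variables with Jacobian $L$; the two manipulations are equivalent.
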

 \begin{proof}
 We start by calculating the integral kernel
 \begin{multline}
\langle x\vert g(\mathsf{p})\vert y\rangle=\int_{\mathbb{R}} \tilde g(t)\langle x\vert \re^{2\pi\im t\mathsf{p}}\vert y\rangle\operatorname{d}\! t=\int_{\mathbb{R}} \tilde g(t)\langle x+t\vert y\rangle\operatorname{d}\! t\\=\int_{\mathbb{R}} \tilde g(t)\delta( x+t-y)\operatorname{d}\! t
=\tilde g(y-x).
\end{multline}
Now, by using the convention  $x_{L+1}\equiv x_1$, we write
\begin{multline}
 \operatorname{Tr}\left((f(\mathsf{q})g(\mathsf{p}))^L\right)=
\int_{\mathbb{R}^L}\operatorname{d}\! x_1\dots\operatorname{d}\! x_L
 \prod_{j=1}^L\langle x_j\vert f(\mathsf{q})g(\mathsf{p})\vert x_{j+1}\rangle\\
 =
\int_{\mathbb{R}^L}\operatorname{d}\! x_1\dots\operatorname{d}\! x_L
 \prod_{j=1}^L f(x_j)\langle x_j\vert g(\mathsf{p})\vert x_{j+1}\rangle=
\int_{\mathbb{R}^L}\operatorname{d}\! x_1\dots\operatorname{d}\! x_L
 \prod_{j=1}^L f(x_j)\tilde g(x_{j+1}-x_j)\\
 =
L\int_{\mathbb{R}^{L+1}}\operatorname{d}\! t\operatorname{d}\! x_1\dots\operatorname{d}\! x_L
\delta(x_1+\dots +x_L-L t) \prod_{j=1}^L f(x_j)\tilde g(x_{j+1}-x_j)\\
 =
L\int_{\mathbb{R}^{L+1}}\operatorname{d}\! t\operatorname{d}\! x_1\dots\operatorname{d}\! x_L
\delta(x_1+\dots +x_L) \prod_{j=1}^L f(x_j+t)\tilde g(x_{j+1}-x_j)\\
=L\int_{\mathbb{R}^L}\operatorname{d}\! x_1\dots\operatorname{d}\! x_L \delta(x_1+\dots+x_L)J_f(x_1,\dots,x_L)\prod_{j=1}^L\tilde g(x_{j+1}-x_j).
\end{multline}
\end{proof}
In our case, the function $f(x)$ has the explicit expression obtained in (\ref{fg-ex}). We now evaluate the function $J_f(x_1,\dots,x_L)$ in this particular case. 

\begin{lemma}
 If
 \begin{equation}\label{fx}
f(x)\equiv \frac {\re^{2\pi hx}}{2\cosh(\pi\mathsf{b} x)},
\end{equation}
then the function $J_f(x_1,\dots,x_L)$ defined by the equation~\eqref{aver} has the following explicit form
\begin{equation}\label{Jx1xN}
J_f(x_1,\dots,x_L)=\frac{1}{2^L\mathsf{b}\sin\!\left(\pi L\left(\frac12-\frac{h}{\mathsf{b}}\right)\right)}\sum_{j=1}^L\prod_{k\ne j}\frac{\re^{2\pi h(x_k-x_j)}}{\sinh(\pi\mathsf{b}(x_k-x_j))}.
\end{equation}
\end{lemma}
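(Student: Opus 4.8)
The plan is to evaluate the integral $J_f$ directly by the residue calculus. Substituting the given $f$ into \eqref{aver} gives
\[
J_f(x_1,\dots,x_L)=\frac1{2^L}\int_{\mathbb{R}}g(t)\operatorname{d}\!t,\qquad
g(t)\equiv\prod_{j=1}^L\frac{\re^{2\pi h(x_j+t)}}{\cosh(\pi\mathsf{b}(x_j+t))},
\]
and convergence of this integral forces $|h|<\mathsf{b}/2$, which I take as the standing hypothesis (it holds automatically for the values of $h$ arising here). First I would record two elementary properties of $g$. Since $\cosh(w+\im\pi)=-\cosh(w)$, the shift $t\mapsto t+\im\mathsf{b}^{-1}$ multiplies each factor of $g$ by $-\re^{2\pi\im h/\mathsf{b}}$, so
\[
g\!\left(t+\im\mathsf{b}^{-1}\right)=(-1)^L\re^{2\pi\im Lh/\mathsf{b}}\,g(t).
\]
Moreover, the only poles of $g$ in the closed strip $0\le\Im t\le\mathsf{b}^{-1}$ are the points $t_j^\ast\equiv-x_j+\tfrac{\im}{2\mathsf{b}}$, $j=1,\dots,L$; these are simple and lie in the open strip when the $x_j$ are pairwise distinct, and no pole meets the boundary lines because the $x_j$ are real.

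Next I would integrate $g$ counterclockwise around the rectangle with vertices $\pm R$ and $\pm R+\im\mathsf{b}^{-1}$. The condition $|h|<\mathsf{b}/2$ makes $g(t)$ decay exponentially as $\Re t\to\pm\infty$, uniformly in the strip, so the two vertical sides drop out as $R\to\infty$; the quasi-periodicity relation turns the top side into $-(-1)^L\re^{2\pi\im Lh/\mathsf{b}}$ times the bottom one, and the residue theorem yields
\[
\left(1-(-1)^L\re^{2\pi\im Lh/\mathsf{b}}\right)\int_{\mathbb{R}}g(t)\operatorname{d}\!t=2\pi\im\sum_{j=1}^L\operatorname{Res}_{t=t_j^\ast}g(t).
\]
To compute the residue at $t_j^\ast$ I would use $\cosh(w+\im\pi/2)=\im\sinh(w)$: since $x_j+t_j^\ast=\tfrac{\im}{2\mathsf{b}}$ and $x_k+t_j^\ast=(x_k-x_j)+\tfrac{\im}{2\mathsf{b}}$, the $j$-th factor contributes the residue $\tfrac{1}{\im\pi\mathsf{b}}\re^{\im\pi h/\mathsf{b}}$, while each factor with $k\ne j$ contributes $\tfrac{\re^{2\pi h(x_k-x_j)}\re^{\im\pi h/\mathsf{b}}}{\im\sinh(\pi\mathsf{b}(x_k-x_j))}$. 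Collecting the powers of $\im$ and of $\re^{\im\pi h/\mathsf{b}}$ gives
\[
\operatorname{Res}_{t=t_j^\ast}g(t)=\frac{\re^{\im\pi Lh/\mathsf{b}}}{\im^L\pi\mathsf{b}}\prod_{k\ne j}\frac{\re^{2\pi h(x_k-x_j)}}{\sinh(\pi\mathsf{b}(x_k-x_j))}.
\]

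Finally I would simplify the scalar prefactor. With $\theta\equiv\pi L\bigl(\tfrac12-\tfrac{h}{\mathsf{b}}\bigr)$ one has $2\pi Lh/\mathsf{b}=\pi L-2\theta$, hence $\re^{2\pi\im Lh/\mathsf{b}}=(-1)^L\re^{-2\im\theta}$ and $\re^{\im\pi Lh/\mathsf{b}}=\im^L\re^{-\im\theta}$, so that
\[
1-(-1)^L\re^{2\pi\im Lh/\mathsf{b}}=1-\re^{-2\im\theta}=2\im\,\re^{-\im\theta}\sin\theta ,
\]
and the factors $\im^L$ and $\re^{-\im\theta}$ cancel between numerator and denominator, leaving
\[
J_f(x_1,\dots,x_L)=\frac{1}{2^L\mathsf{b}\sin\theta}\sum_{j=1}^L\prod_{k\ne j}\frac{\re^{2\pi h(x_k-x_j)}}{\sinh(\pi\mathsf{b}(x_k-x_j))},
\]
which is exactly \eqref{Jx1xN}. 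The case of coinciding arguments then follows by continuity, since both sides are continuous in $(x_1,\dots,x_L)$ (the apparent singularities of the right-hand side at $x_k=x_j$ cancel in the sum). The main obstacle is purely organizational: carrying the phases $\im^L$ and $\re^{\im\pi h/\mathsf{b}}$ cleanly through the residue sum and the prefactor simplification, and verifying that the decay needed to discard the vertical sides of the contour is precisely the condition $|h|<\mathsf{b}/2$ under which $J_f$ is defined.
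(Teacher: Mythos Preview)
Your proof is correct and follows essentially the same route as the paper's. The paper invokes Lemma~2.1 of \cite{GK} to package the contour-shift argument (exploiting the quasi-periodicity $F(t+\im\mathsf{b}^{-1})=\lambda^{2L}F(t)$ with $\lambda=-\im\re^{\pi\im h/\mathsf{b}}$, so $\lambda^{2L}=(-1)^L\re^{2\pi\im Lh/\mathsf{b}}$) and then computes the same residues; you carry out that rectangular-contour argument explicitly and track the phases by hand, arriving at the identical formula.
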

\begin{proof}
 By using Lemma~2.1 of \cite{GK}, we have
 \begin{multline}\label{calJ}
J_f(x_1,\dots,x_L)=\int_{\mathbb{R}}F(t)\operatorname{d}\! t=\left(\int_{\mathbb{R}}-\int_{\mathbb{R}+\im\mathsf{b}^{-1}}\right)\frac{F(t)}{1-F(t+\im/\mathsf{b})/F(t)}\operatorname{d}\! t\\
=\frac{1}{1-\lambda^{2L}}\left(\int_{\mathbb{R}}-\int_{\mathbb{R}+\im\mathsf{b}^{-1}}\right)F(t)\operatorname{d}\! t=\frac{2\pi\im}{1-\lambda^{2L}}\sum_{0<\Im z<\mathsf{b}^{-1}}\operatorname{Res}_{t=z}F(t),
\end{multline}
where
\begin{equation}
F(t)=\prod_{j=1}^Lf(x_j+t),
\end{equation}
so that
\begin{equation}
\frac{F(t+\im/\mathsf{b})}{F(t)}=\lambda^{2L},\quad \lambda\equiv- \im \re^{\pi\im h/\mathsf{b}}.
\end{equation}
The contribution in \eqref{calJ} comes from $L$ poles of $F(t)$ at the points $t=z_j\equiv -x_j+\frac{\im}{2\mathsf{b}}$, $1\le j\le L$, with the residues
\begin{equation}
\operatorname{Res}_{t=z_j}F(t)=\frac{(\lambda/2)^L}{\pi\mathsf{b}}\prod_{k\ne j}\frac{\re^{2\pi h(x_k-x_j)}}{\sinh(\pi\mathsf{b}(x_k-x_j))}.
\end{equation}
which lead straightforwardly to \eqref{Jx1xN}.
\end{proof}

When $L=1$, the trace can be computed immediately by using Remark~\ref{rem1}, and one finds
\be
\label{strace}
\tr \rho_{m,n}= {1\over 2 \mathsf{b} \cos \left( {\pi h \over \mathsf{b}} \right)} {
\fadn\!\left(c_{\mathsf{b}}-2\im a\right)\fadn\!\left(c_{\mathsf{b}}-2\im c\right)\over \fadn\!\left(c_{\mathsf{b}} -2 \im (a+c)\right)}.
\ee
When $L=2$, the formula~\eqref{trN} with $f(x)$, $g(x)$ given in (\ref{fg-ex}), together with Remark~\ref{rem1}, gives rise to
 \begin{multline}
 \label{ist}
\operatorname{Tr}(\rho_{m,n}^{2})\\=\frac{\left|\fad\left(2\im(a+c)-c_{\mathsf{b}}\right)\right|^2}{2\mathsf{b}\sin\left(2\pi h/\mathsf{b}\right)}\int_{\mathbb{R}}\frac{\sinh(2\pi hs)\fadn(s+c_{\mathsf{b}}-2\im a)\fadn(s+c_{\mathsf{b}}-2\im c)}{\sinh(\pi\mathsf{b}s)\fadn(s-c_{\mathsf{b}}+2\im a)\fadn(s-c_{\mathsf{b}}+2\im c)}\operatorname{d}\! s\\
=\frac{\left|\fad\left(2\im(a+c)-c_{\mathsf{b}}\right)\right|^2}{2\mathsf{b}\sin\left(2\pi h/\mathsf{b}\right)}\int_{\mathbb{R}}\frac{\sinh(2\pi hs)}{\sinh(\pi\mathsf{b}s)}W_{\frac{\myh}2-a}(s)W_{\frac{\myh}2-c}(s)
\operatorname{d}\! s, 
\end{multline}
where
\begin{equation}
W_a(x)\equiv\left|\mypsi{a}{a}(x)\right|^2.
\end{equation}
\subsection{Some explicit results} The spectral traces for $L=1, 2$ depend on $m,n$ and $\mathsf{b}$, and they can be computed very explicitly in some special cases. 
Using the results in \cite{GK}, one can evaluate the traces in closed form when $\mathsf{b}$ is of the form 
\be
\mathsf{b}={\sqrt{ M/N}}, 
\ee
and $M$, $N$ are coprime positive integers. Note that, due to (\ref{b-hbar}), this corresponds to rational values of $\hbar/\pi\in \BQ_{>0}$. Let us present some results in the case 
$m=n=1$, which is the operator associated to local $\IP^2$. In this case, the formula~\eqref{strace} for the trace can be written as 
\be
\label{gen-b}
\tr \rho_{1,1}= \exp \left[ -{\pi \im \over 36} \left( 12 c_{\mathsf{b}}^2 + 4 \mathsf{b}^2-3\right) \right] {\fad\left( c_{\mathsf{b}}- {\im \mathsf{b}\over 3} \right)^2  \over \mathsf{b}\sqrt{3}\fad\left( c_{\mathsf{b}}-  {2 \im \mathsf{b}\over 3} \right)}={1\over \mathsf{b}}\left|\fad\left( c_{\mathsf{b}}- {\im \mathsf{b}\over 3} \right)\right|^3.
\ee
By using Theorem~1.9 of \cite{GK} we obtain, for example,
\be
\label{first-trace}
\ba
\tr \rho_{1,1}\left(\hbar ={2 \pi \over 3} \right)&= {1\over 3} \exp \left( {V \over 2 \pi} \right), \qquad V= 2 \, {\rm Im}\left( {\rm Li}_2 \left(\re^{\pi \ri \over 3} \right)  \right), \\
\tr \rho_{1,1}\left(\hbar =\pi  \right)&= {1\over 2 {\sqrt{3}}}, \\
\tr \rho_{1,1}\left(\hbar =4 \pi  \right)&= {1\over 36}.
\ea
\ee
Note that $V$ is the volume of the figure-eight knot. 
\begin{remark}
 One can show that if 
 \begin{equation}
\mathsf{b}^2= {k+1 \over l+{1\over2}-{\varepsilon\over 6}},\qquad  k,l\in\mathbb{Z}_{\ge0},\quad \varepsilon\in\{\pm1\},
\quad \gcd\left(k+1,3l+{3-\varepsilon\over2}\right)=1,
\end{equation}
then formula \eqref{gen-b} can be simplified to
 \begin{equation}
\tr \rho_{1,1}={1\over\mathsf{b}}
\left|{\left(q_-\right)_{k+{1-\varepsilon\over2}}\over\mathsf{b}\left(q_+\right)_l}\right|^{3\varepsilon},\qquad q_\pm\equiv e^{2\pi\im\mathsf{b}^{\pm2}},\quad (x)_j\equiv (1-x)(1-x^2)\dots (1-x^j).
\end{equation}
\end{remark}
Let us now consider the case in which $m$ is arbitrary, but $n=1$. We further specialize the answer to the case $\hbar=2 \pi$. As we will shortly review, this value leads to a simplified 
conjecture for the Fredholm determinant of the operator $\rho_{m,n}$, and therefore it is interesting to see if such a simplification occurs in the formulae obtained above. 
When $\hbar=2 \pi$, one has from (\ref{b-hbar}) that
\be
\label{msb}
\mathsf{b}^2=m+2. 
\ee
Using (\ref{eq.bshift}), one can see that for this value of $\mathsf{b}$, the function $|\mypsi{a}{c}(x)|^2$ indeed simplifies to an elementary function, 
and the trace can be evaluated in closed form to be 
\be
\tr \rho_{m,1} \left( \hbar =2 \pi \right) = {1\over 4(m+2) \sin \left( {\pi \over m+2} \right) \sin \left( {2 \pi \over m+2} \right)}. 
\ee
In particular, we find 
\be
\label{max-first}
\ba
\tr \rho_{1,1} \left( \hbar =2 \pi\right)&={1\over 9}, \\
\tr \rho_{2,1}\left( \hbar =2 \pi\right)&={1\over 8 {\sqrt{2}}}. 
\ea
\ee

Let us now list some useful values for the second spectral trace. Like before, the integrand in (\ref{ist}) simplifies to an elementary 
function when $\mathsf{b}$ satisfies (\ref{msb}), and the resulting integral can be calculated straightforwardly. One finds in this way,  
\be
\label{max-second}
\ba
\tr \rho^2_{1,1} \left( \hbar =2 \pi\right)&={1\over 27}-{1\over 6 {\sqrt{3}} \pi}, \\
\tr \rho^2_{2,1}\left( \hbar =2 \pi\right)&={1\over 16 \pi}-{1\over 64}. 
\ea
\ee

\subsection{Refined traces} As we have seen in subsection \ref{sym-eqs}, in the case of the operator $\rho_{1,1}$, there is a $\mathbb{Z}_3$ symmetry implemented by the unitary operator 
$\mathsf{V}$ of order three, defined in (\ref{Vdef}), which commutes with $\rho_{1,1}$. This means that both operators can be diagonalized simultaneously. 
The subspace spanned by eigenfunctions with a fixed eigenvalue of $\mathsf{V}$ can be obtained by using the projection operators, 
\begin{equation}
\mathsf{P}_i\equiv\frac13\sum_{j=0}^2\omega^{-ij}\mathsf{V}^j, \quad i=0,1,2,\quad \omega\equiv \re^{2\pi\im/3},
\end{equation}
which commute with $\rho_{1,1}$ and solve the spectral problem for $\mathsf{V}$, in the sense that they satisfy the relations
\begin{equation}
\mathsf{P}_i\mathsf{P}_j=\delta_{i,j}\mathsf{P}_i,\quad \mathsf{V}\mathsf{P}_i=\omega^i\mathsf{P}_i,\quad \sum_{j=0}^2\mathsf{P}_j=1.
\end{equation}
In particular, we can consider the refined spectral traces 
\be
\label{rst}
\operatorname{Tr}\left(\rho_{1,1}^{L}\mathsf{P}_i\right), \qquad i=0,1,2, 
\ee
which can be calculated as $(L+1)$-dimensional integrals by using the explicit expressions for the kernels of $\rho_{1,1}$ and $\mathsf{V}$ (in fact, one can 
simplify them to $L$-dimensional integrals. A detailed treatment of these refined traces will appear elsewhere.) 

Empirically, one finds that the spectrum of $\rho_{1,1}$ is non-degenerate, with eigenfunctions $\{ |\varphi_n \rangle \}_{n\in \mathbb{Z}_{\ge 0}}$. We will write,  
\be
\rho_{1,1} |\varphi_n \rangle = \re^{-E_n}|\varphi_n \rangle, \qquad  n \in \mathbb{Z}_{\ge 0}, 
\ee
where we order the eigenvalues as 
\be
E_0 < E_1< E_2 < \cdots
\ee
Due to the $\mathbb{Z}_3$ symmetry implemented by $\mathsf{V}$, each of the 
eigenfunctions of $\rho_{1,1}$ is 
also an eigenfunction of $\mathsf{V}$, $i=0, 1,2$, and one finds that
\be
 \mathsf{V}  |\varphi_{i+3m} \rangle= \omega^i  |\varphi_{i+3m} \rangle, \qquad m \in \mathbb{Z}_{\ge 0}, \qquad i=0,1,2. 
 \ee
 Such a $\mathbb{Z}_3$ decomposition of the spectrum was noted in the numerical analysis of the spectrum of $\rho_{1,1}$ in \cite{GHM}. 
The refined spectral traces (\ref{rst}) are then given by, 
\be
\operatorname{Tr}\left(\rho_{1,1}^{L}\mathsf{P}_i\right)= \sum_{m \ge 0} \re^{-L E_{i+3 m }}.
\ee
Note that the structure of the spectrum in this problem is similar to what happens in bound-state problems in ordinary one-dimensional Quantum Mechanics, in the presence of a $\mathbb{Z}_2$ 
parity symmetry: in that case, the eigenstates of the Hamiltonian $|\varphi_n\rangle$, $n \in \mathbb{Z}_{\ge 0}$, with energies $E_0<E_1< E_2 < \cdots$, have parity $(-1)^n$.

\section{Comparison with the conjecture of \cite{GHM}} In the previous sections, we have shown that the operators $\rho_S= \mathsf{O}_S^{-1}$ arising from mirror curves, for many 
toric del Pezzo threefolds, are of trace class. This means that their Fredholm determinant 
\be
\Xi_S(\kappa, \hbar)={\rm det} \left( 1+ \kappa \rho_S\right) 
\ee
exists and is an entire function of $\kappa$ \cite{Si}. The Taylor expansion of the Fredholm determinant around $\kappa=0$ provides a generating 
functional of the spectral traces, since 
\be
\log \Xi_S(\kappa, \hbar)= -\sum_{L=1}^\infty {(-\kappa)^L \over L} \tr \rho_S^L. 
\ee
According to the conjecture of \cite{GHM}, the Fredholm determinant of the operators $\rho_S$ can be computed exactly in terms of enumerative geometry of $X$. This provides 
as well an exact semiclassical expansion for the spectrum of the operators $\rho_S$, in terms of an exact quantization condition akin to those found in 
conventional Quantum Mechanics (see, for example, \cite{ZJJ}). The perturbative, WKB part of this quantization condition agrees with the result of \cite{MM, ACDKV}, but it contains 
in addition instanton corrections which are determined by the standard Gromov--Witten (or Gopakumar--Vafa (GV) invariants \cite{GV}) of $X$. 

A detailed statement of the conjecture can be found in \cite{GHM}. We provide here a short summary, and we state the result for local $\IP^2$. This 
conjecture involves two generating functionals of enumerative invariants of a toric Calabi--Yau threefold $X$: the Nekrasov--Shatashvili (NS) limit \cite{NS} 
of the refined topological string free energy \cite{IKV,CKK,NO}, which we will call for simplicity the NS free energy, 
and the standard topological string free energy in the GV representation. The latter is given by 

\be
\label{gv-form}
\mathsf{F}^{\rm WS}\left({\bf t}, g_s\right)= \sum_{g\ge 0} \sum_{\bf d} \sum_{w=1}^\infty {1\over w} n_g^{ {\bf d}} \left(2 \sin { w g_s \over 2} \right)^{2g-2} \re^{-w {\bf d} \cdot {\bf t}}. 
\ee
In this expression, the vector ${\bf t}$ contains the K\"ahler parameters of the Calabi--Yau $X$, ${\bf d}$ is the vector of degrees (which are non-negative integers), $g_s$ is a parameter 
usually called the string coupling constant, and $n_g^{\bf d}$ are integer invariants called GV invariants of $X$. They depend on the degree ${\bf d}$ and on the genus $g$. The NS free energy 
is given by 
\be
\label{NS-j}
\mathsf{F}^{\rm NS}({\bf t}, \hbar) =\sum_{j_L, j_R} \sum_{w, {\bf d} } 
N^{{\bf d}}_{j_L, j_R}  \frac{\sin\frac{\hbar w}{2}(2j_L+1)\sin\frac{\hbar w}{2}(2j_R+1)}{2 w^2 \sin^3\frac{\hbar w}{2}} \re^{-w {\bf d}\cdot{\bf  t}}. 
\ee
Here, $N^{{\bf d}}_{j_L, j_R}$ are the refined BPS invariants of $X$ considered \cite{IKV,CKK,NO}. They depend on the degree ${\bf d}$ and 
on two half-integers (or ``spins") $j_L$, $j_R$. The GV invariants $n_g^{\bf d}$ can be written as particular combinations of the refined BPS invariants. 

According to the conjecture of \cite{GHM}, the Fredholm determinant of the operators $\rho_S$ associated to 
the mirror curves can be expressed in terms of these two generating functionals. We will state the result for local $\IP^2$, which has one single K\"ahler parameter $t$. We first need a dictionary 
relating the parameter $\kappa$ appearing in the Fredholm determinant, to the geometric data of local $\IP^2$. To do this, we recall that local $\IP^2$ has a single complex parameter or modulus $\tilde u$, and we identify 
\be
\kappa \equiv\re^\mu= \tilde u,
\ee
where in the second equation we have introduced the ``fugacity" $\mu$. The complex modulus $\tilde u$ is related to the K\"ahler parameter $t$ through the so-called mirror map. 
As shown in \cite{ACDKV}, the WKB approach makes it possible to define a ``quantum" mirror map $t(\hbar)$, which in the limit $\hbar \rightarrow 0$ agrees with the conventional 
mirror map. The quantum mirror map of local $\IP^2$ can be computed as a power series in $\re^{-3 \mu}$, 
with $\hbar$-dependent coefficients, and it has been studied in \cite{ACDKV,HKRS}. One finds, for the very first orders, 
\be
\label{muef-def}
\mu_{\rm eff}\equiv {t(\hbar)\over 3}= \mu+ 2 \cos\left( {\hbar \over 2} \right)  {\rm e}^{-3 \mu}+ \cdots,
\ee
which defines the ``effective" chemical potential $\mu_{\rm eff}$. 

We now introduce three different functions of $\mu$. The first one is the ``perturbative" grand potential,  
 \be
 \label{jp}
 \mathsf{J}^{({\rm p})}(\mu, \hbar)= {C(\hbar)\over 3} \mu^3 +B(\hbar) \mu + A(\hbar), 
 \ee
 where 
 \be
 \label{bc}
 C(\hbar)=  {9 \over 4 \pi \hbar}, \qquad B(\hbar)= {\pi \over 2 \hbar}-{\hbar \over 16 \pi}. 
 \ee
 The function $A(\hbar)$ is more complicated, and it is defined as follows. We first define,
 \be
\label{ak}
A_{\rm c}(k)= \frac{2\zeta(3)}{\pi^2 k}\left(1-\frac{k^3}{16}\right)
+\frac{k^2}{\pi^2} \int_0^\infty \frac{x}{\re^{k x}-1}\log(1-\re^{-2x})\rd x.
\ee
This function was first introduced, in a very different context, in \cite{MP}, and determined in integral form in \cite{HHHNSY,HO}. It can be obtained by an appropriate 
resummation of the ``constant map contribution" appearing in Gromov--Witten theory \cite{BCOV,MaMo,FP}. 
Then, the function $A(\hbar)$ in (\ref{jp}) is given by 
\be
\label{ah-p2}
A(\hbar) ={3 A_{\rm c}(\hbar/\pi)- A_{\rm c}(3\hbar/\pi) \over 4}. 
\ee
The two other functions of $\mu$ are obtained from the generating functionals of enumerative invariants introduced before. 
The first one is the ``membrane" grand potential, 
\be
\mathsf{J}^{\rm M2}(\mu_{\rm eff}, \hbar)= {t(\hbar) \over 2 \pi} {\partial \mathsf{F}^{\rm NS}(t(\hbar), \hbar) \over \partial t} 
+{\hbar^2 \over 2 \pi} {\partial \over \partial \hbar} \left(  {\mathsf{F}^{\rm NS}(t(\hbar), \hbar) \over \hbar} \right),
\ee
where $t(\hbar)=3 \mu_{\rm eff}$, as in (\ref{muef-def}). The second one is the ``worldsheet" grand potential
\be
\label{ws-p2}
\mathsf{J}^{\rm WS}(\mu_{\rm eff}, \hbar)= \mathsf{F}^{\rm WS}\left( {6 \pi \mu_{\rm eff} \over \hbar}+ \pi \ri , {4 \pi^2 \over \hbar} \right).
\ee
The total grand potential is the sum of these three functions, 
\be
\label{jtotal}
\mathsf{J}_{\IP^2}(\mu, \hbar) = \mathsf{J}^{({\rm p})}(\mu_{\rm eff}, \hbar)+ \mathsf{J}^{\rm M2}(\mu_{\rm eff}, \hbar)+ \mathsf{J}^{\rm WS}(\mu_{\rm eff}, \hbar), 
\ee
and it was first considered in \cite{HMMO}. 

According to the conjecture of \cite{GHM}, the Fredholm determinant of the operator $\rho_{\IP^2}=\rho_{1,1}$ is given by 
\be
\label{spec-det}
\Xi_{\IP^2} (\kappa, \hbar)= \sum_{n \in \BZ} \exp\left(  \mathsf{J}_{\IP^2}(\mu + 2 \pi \im n, \hbar) \right).  
\ee
The sum over $n$ defines a ``generalized theta function" $\Theta_{\IP^2} \left( \mu, \hbar \right)$, 
\be
\label{gen-theta}
\Xi_{\IP^2} (\kappa, \hbar)=\re^{ \mathsf{J}_{\IP^2}(\mu, \hbar) }  \Theta_{\IP^2} \left( \mu, \hbar \right). 
\ee
The expression (\ref{spec-det}) is somewhat formal, since the convergence properties of the r.h.s. have not been established. In particular, the grand potential 
is defined in (\ref{jtotal}) as a formal power series in the two exponentials
\be
\re^{-3 \mu}, \qquad \re^{-6 \pi \mu/\hbar}. 
\ee
Note that both formal power series, $\mathsf{J}^{\rm M2}(\mu_{\rm eff}, \hbar)$ and $\mathsf{J}^{\rm WS}(\mu_{\rm eff}, \hbar)$, have poles 
when $\hbar/\pi\in \BQ_{>0}$. As explained in \cite{KM}, this 
is indeed the reason why the WKB expansion of \cite{ACDKV} can not account for the spectral properties of the operator $\rho_{\IP^2}$. However, it is easy 
to show that in the total grand potential (\ref{jtotal}) these poles cancel \cite{HMMO} (this property was discovered, in a related example, in \cite{HMO}), 
so that $\mathsf{J}_{\IP^2}(\mu, \hbar)$ is well-defined as a formal power series. There is in 
addition evidence from explicit computations that the grand potential $\mathsf{J}_{\IP^2}(\mu, \hbar)$ is 
analytic in a neighbourhood of infinity in the $\mu$ plane, at least when 
$\hbar$ is real. 

As noted in \cite{GHM}, the value $\hbar=2 \pi$ is special. In that case, the grand potential $\mathsf{J}_{\IP^2}(\mu, 2 \pi)$ 
simplifies and one finds \cite{GHM}
\be
\label{j-p2}
\mathsf{J}_{\IP^2}(\mu, 2\pi)= A(2 \pi)+
  {1\over 4 \pi^2} \left( \widehat F_0(t) - t \partial_t  \widehat F_0(t) + {t^2 \over 2} \partial_t^2  \widehat F_0(t)\right)+  \widehat F_1(t)+ \widehat F_1^{\rm NS}(t). 
\ee
In this equation, $ \widehat F_0(t)$, $ \widehat F_1(t)$ are the standard genus zero and genus one free energies of local $\IP^2$, but after setting $\re^{-t} \rightarrow -\re^{-t}$ in the 
worldhsheet instanton expansion (this is due to the shift in (\ref{ws-p2})). $\widehat F_1^{\rm NS}(t)$ is obtained by expanding the 
NS free energy up to next-to-leading order in $\hbar$, and performing the same change of sign. 
The K\"ahler parameter $t$ is now related to $\mu$ by the standard mirror map, again up to a sign (see \cite{GHM} for further details of this computation and explicit 
expressions for these free energies). In addition, one has that
\be
A(2 \pi)= {1\over 6} \log(3)- {\zeta(3) \over 3 \pi^2}. 
\ee
 From the expression (\ref{j-p2}), it is easy to see that $\mathsf{J}_{\IP^2}(\mu, 2\pi)$ is analytic in the $\mu$ plane, with a region of analyticity 
determined by the standard conifold singularity of local $\IP^2$. For this special value of $\hbar$, it is also 
possible to show that the generalized theta function appearing in (\ref{gen-theta}) becomes a standard Jacobi theta function:
\be
 \Theta_{\IP^2} \left( \mu, 2 \pi \right)=\vartheta_3 \left( \xi-{3 \over 8}, {9 \tau \over 4}\right), 
 \ee
 where
 \be
 \label{xi-f}
 \xi= {3\over 4 \pi^2} \left( t \partial_t^2 \widehat  F_0(t) -\partial_t  \widehat F_0(t)\right),  \qquad 
 \tau={2 \ri \over \pi} \partial_t^2  \widehat  F_0(t). 
 \ee
Note that the $\tau$ appearing here is the standard modulus of the genus one mirror curve of local $\IP^2$. In particular, one has that ${\rm Im}(\tau)>0$. 
Therefore, in the case $\hbar=2 \pi$, the formula for 
the Fredholm determinant (\ref{spec-det}) is completely explicit and well-defined. It can be expanded around $\kappa=0$ by using the analytic continuation of the various quantities appearing 
here. One finds (see \cite{GHM}, eq. (4.56)),
\be
\Xi_{\IP^2} (\kappa, 2 \pi)=1 + {\kappa \over 9} + \left( {1\over 12 {\sqrt{3}} \pi}-{1\over 81} \right) \kappa^2 + \CO(\kappa^3), 
\ee
in perfect agreement with the results (\ref{max-first}), (\ref{max-second}). 

What happens when $\hbar \not= 2 \pi$? In this case, it is difficult to extract analytic results for the spectral traces from (\ref{spec-det}), but there is a good 
geometric reason for that: the grand potential is expressed as a formal power 
series around the so-called large radius point at $\mu \rightarrow \infty$, while the spectral traces are obtained by 
expanding around $\kappa=0$, i.e. around $\mu= -\infty$. This is the orbifold point of the geometry, and there are no known resummations 
of the generating functions of BPS invariants 
around that point. However, one can still obtain very precise {\it numerical} results for the spectral traces. In this numerical calculation, the 
precision increases with the number of terms retained in the expansion at large radius. This makes it possible to test the conjecture (\ref{spec-det}) 
against the analytical results for the spectral traces derived in this paper, with arbitrary precision. 

The numerical calculation of the traces for general $\hbar$ proceeds as follows. Let us write down the expansion of $\Xi_{\IP^2} (\kappa, \hbar)$ around $\kappa=0$ as 
\be
\Xi_{\IP^2}(\kappa, \hbar)=1+\sum_{N=1}^\infty Z_{\IP^2}(N, \hbar) \kappa^N. 
\ee
We will call $Z_{\IP^2}(N, \hbar)$ the canonical partition function. It can be written in terms of the spectral traces $\tr \rho_{1,1}^{N'}$ with $N'\le N$, and viceversa. For example, 
one has 
\be
\ba
Z_{\IP^2}(1, \hbar) &=\tr \rho_{1,1} , \\
Z_{\IP^2}(2, \hbar)&= {1\over 2} \left( \left(  \tr \rho_{1,1} \right)^2 -\tr \rho^2 _{1,1}   \right). 
\ea
\ee
By writing $Z_{\IP^2}(N, \hbar)$ as a 
contour integral and performing a contour deformation \cite{HMO,GHM}, one obtains the following integral representation of the partition function,  
\be
\label{int-Z}
Z_{\IP^2}(N, \hbar) = {1\over 2 \pi \ri} \int_{\mathcal C} \re^{J_{\IP^2}(\mu, \hbar)- N \mu} \rd \mu, 
\ee
where ${\mathcal C}$ is a contour going from $\re^{-\im \pi/3} \infty$ to $\re^{\im \pi/3} \infty$. In deriving this representation, 
we assumed that $J_{\IP^2}(\mu, \hbar)$ is analytic in a neighborhood of infinity. Note that ${\mathcal C}$ is the standard contour for 
the integral representation of the Airy function. This contour is appropriate due to the fact that $J_{\IP^2}(\mu, \hbar)$ is a 
cubic polynomial, plus exponentially small corrections. By expanding the exponentially small corrections in $J_{\IP^2}(\mu, \hbar)$, 
the partition functions $Z_{\IP^2}(N, \hbar)$ can be evaluated as an infinite sum of Airy functions and their derivatives. More concretely, if we write
\begin{equation}
\re^{J_{\IP^2}(\mu, \hbar)} = \re^{J^{({\rm p})} (\mu,  \hbar)} \sum_{l, n} a_{l,n} \mu^n \re^{- l \mu}, 
\end{equation}
we find
\begin{equation}
\label{z-exp}
Z_{\IP^2}(N, \hbar) =\frac{\re^{A(\hbar)}}{\left(C(\hbar) \right)^{1/3}}
\sum_{l,n} a_{l,n} \left(-\frac{\partial}{\partial N}\right)^n \mathrm{Ai}
\left(\frac{N+l -B(\hbar)}{\left(C(\hbar)\right)^{1/3}}\right), 
\end{equation}
where ${\rm Ai}(z)$ is the Airy function. Note that $n$ takes non-negative integer values, but $l$ is of the form $3 p + 6 \pi q/\hbar$, with $p,q$ non-negative integers. 
The infinite sum in (\ref{z-exp}) turns out to converge very rapidly, and produces highly accurate numerical answers for the partition functions. 
Using this procedure, one obtains for example, 
\be
Z_{\IP^2}\left(1, \hbar={2 \pi \over 3} \right)= 0.4604521481728325977904889856168747087632124207...
\ee
where we have kept the digits which are stable as we increase the number of exponentially small corrections in the formal power series defining $\mathsf{J}_{\IP^2}(\mu, \hbar)$. 
This is in agreement with the analytic result in the first line of (\ref{first-trace}) (by pushing the calculation of $\mathsf{J}_{\IP^2}(\mu, \hbar)$ up to degree $d=12$, we can obtain 
an agreement with the analytic result of 60 decimal digits). We can test with similar accuracy the other analytic results presented in (\ref{first-trace}) for local $\IP^2$. 
It is also possible to test the results for the first two spectral traces of the operator $\rho_{2,1}$ in (\ref{max-first}), (\ref{max-second}), 
by using the conjecture of \cite{GHM}, as applied to local $\IF_2$ in \cite{GKMR}.

In this paper we have assumed that $\hbar$ is real and positive, but it is clear that many of our results can be extended to complex values of $\hbar$.  
For example, the formulae we have obtained for the spectral traces can be continued to complex $\hbar$, provided that ${\rm Im}(\hbar)>0$ (this follows from the relationship 
(\ref{b-hbar}) and the constraint for $\mathsf{b}^2$  in (\ref{qq})). The conjectural expression for the Fredholm determinant (\ref{spec-det}) can be also extended to complex values of $\hbar$, and 
a numerical calculation of the traces is indeed possible, by using the formula (\ref{z-exp}). We have explicitly verified that the result (\ref{gen-b}) for the first spectral 
trace agrees with the numerical calculation of $Z_{\IP^2}\left(1, \hbar \right)$ for various complex values of $\hbar$. This indicates that the conjecture of \cite{GHM} has a natural 
extension to complex $\hbar$, which agrees with the analytic results obtained in this paper. 

 \section{Conclusions and outlook}

In this paper we have shown that, for a large class of Calabi--Yau geometries $X$ associated to toric, almost del Pezzo surfaces $S$ by (\ref{dP}), 
the quantization of mirror curves proposed in \cite{GHM} leads to a new family of 
trace class, positive-definite operators $\rho_S$ on $L^2(\BR)$. According to the conjecture put forward in \cite{GHM}, the spectral problem 
associated to these operators is exactly solvable: the spectrum is determined by an exact quantization condition, derived in turn from an exact expression 
for their Fredholm determinant. Our results for the spectral traces in this paper lead to very explicit and non-trivial tests of the conjecture of \cite{GHM}. 
It is clear that this new family of operators deserves further study, as they provide a fascinating bridge between the spectral 
theory of trace class operators, and the enumerative geometry of toric Calabi--Yau threefolds. Although we have restricted ourselves to geometries of the form (\ref{dP}), 
similar considerations should apply to more general toric Calabi--Yau threefolds.

Our results also have some conceptual consequences for topological string theory. Indeed, one of the motivations 
of \cite{GHM} is to provide a non-perturbative approach to topological strings in the case of toric Calabi--Yau manifolds. 
In order to have a non-perturbative approach, we need a well-defined problem or quantity. Thanks to our results, it is clear that the Fredholm determinant of the operator associated to 
the mirror curve is such a quantity. As explained in \cite{GHM}, the conventional topological string free energy appears as a 't Hooft expansion of the 
logarithm of the Fredholm determinant, and therefore it emerges as an asymptotic approximation to a well-defined quantity. 

An important aspect of our proof is its constructive character, in the sense that we provide an explicit expression 
for the integral kernel of the operators $\rho_S$ 
in some cases, as well as explicit expressions for some spectral traces. As we have shown, this kernel involves in an essential way Faddeev's quantum dilogarithm, 
and the resulting expressions for the spectral 
traces are very similar to the state-integral invariants of three-manifolds studied in \cite{Hi,DGLZ,AK,KLV,DG,AK:complex,Dim}. It is particularly intriguing that quantities like the volume of the $4_1$ knot appear in this context, 
as we have seen in (\ref{first-trace}). This might be an indication of a deeper relation between complex Chern--Simons theory and topological string theory. 

The explicit expressions that we have obtained for the integral kernels of the three-term operators 
make it possible to write down matrix integral expressions for the corresponding canonical partition functions (similar to (\ref{trN})). It would 
be interesting to study these matrix integrals with large $N$ techniques, and verify in this way some of the statements in \cite{GHM}. In view of the similarity between 
the integral kernel (\ref{ex-k}) and the kernels considered in \cite{Z,TW}, it would be also interesting to see if the Fredholm determinant of the operator $\rho_S$ is encoded 
in a TBA system. 

In the case of $\rho_{1,1}$, the operator corresponding to local $\IP^2$, we have found the explicit unitary operator which implements the $\mathbb{Z}_3$ symmetry of the spectrum, 
and we have seen that one can define refined spectral traces which take into account the decomposition of the Hilbert space w.r.t. the global symmetry. Similarly, one can 
define refined Fredholm determinants. It would be very interesting to find a refined version of the conjecture in \cite{GHM}, giving an explicit 
expression for these refined determinants. In the case of 
quantum-mechanical operators with parity symmetry, the corresponding refined Fredholm determinants satisfy 
sometimes non-trivial functional relations which determine the spectrum (see for example \cite{DT} for a review in the case of conventional 
Quantum Mechanics, and \cite{GHMb} for an example closely related to the family of operators studied here). This suggests that a more 
precise understanding of the refined spectral traces, 
in the case of $\rho_{1,1}$, might shed light on the solvability of the spectral problem. 

Finally, we hope that our results for the operators associated to mirror curves 
will suggest a strategy to prove the conjecture of \cite{GHM}.


\appendix

\section{Some useful properties of the quantum dilogarithm}
\lbl{sec.QDL}

The quantum dilogarithm $\fad(x)$ is defined by 
\cite{Faddeev}
\begin{equation}\lbl{fad}
\fad(x)
=\frac{(\re^{2 \pi \mathsf{b} (x+c_{\mathsf{b}})};q)_\infty}{
(\re^{2 \pi \mathsf{b}^{-1} (x-c_{\mathsf{b}})};\tq)_\infty} \,,
\end{equation}
where
\be
\label{qq}
q=\re^{2 \pi \im \mathsf{b}^2}, \qquad 
\tq=\re^{-2 \pi \im \mathsf{b}^{-2}}, \qquad 
\mathrm{Im}(\mathsf{b}^2) >0. 
\ee
An integral representation in the strip $|\mathrm{Im} z| < |\mathrm{Im} \, c_{\mathsf{b}}|$ is given by 
\be
\fad(x)=\exp \left( \int_{\mathbb{R}+\im\epsilon}
\frac{\re^{-2\im xz}}{4\sinh(z\mathsf{b})\sinh(z\mathsf{b}^{-1})}
{\operatorname{d}\!z  \over z} \right).
\ee
Remarkably, this function admits 
an extension to all values of $\mathsf{b}$ with 
$\mathsf{b}^2\not\in\mathbb{R}_{\le 0}$. $\fad(x)$ is 
a meromorphic function of $x$ with
\be
\text{poles:} \,\,\, c_{\mathsf{b}} + \im \BN \mathsf{b} + \im \BN \mathsf{b}^{-1},
\qquad
\text{zeros:} \,\, -c_{\mathsf{b}} - \im \BN \mathsf{b} - \im\BN \mathsf{b}^{-1} \,.
\ee
The functional equation
\be
\fad(x) \fad(-x)=\re^{\pi \im x^2} \fad(0)^2, 
\qquad
\fad(0)=\left(\frac{q}{\tilde q}\right)^{\frac{1}{48}} =\re^{\pi\im\left(\mathsf{b}^2+\mathsf{b}^{-2}\right)/24}
\ee
allows us to move $\fad(x)$ from the denominator to the numerator. In addition, when $\mathsf{b}$ is either real or on the unit circle, we have 
the unitarity relation
\be
\label{unit-fad}
{\overline{\fad(x)}}={1\over \fad\left( \overline x\right)}. 
\ee

The asymptotics of the 
quantum dilogarithm are given by~\cite[App.A]{AK}
\begin{equation}
\lbl{eq.as}
\fad(x) \sim \begin{cases}
\fad(0)^2\re^{\pi \im x^2} & \text{when} \quad \Re(x) \gg 0, \\
1             & \text{when} \quad \Re(x) \ll 0.
\end{cases}
\end{equation}

The quantum dilogarithm is a quasi-periodic function. Explicitly,
it satisfies the equations
\begin{subequations}
\begin{align}
\lbl{eq.bshift}
\frac{\fad(x+c_{\mathsf{b}}+\im \mathsf{b})}{
\fad(x+c_{\mathsf{b}})} 
&= \frac{1}{1-q \re^{2 \pi \mathsf{b} x}} 
\\ 
\lbl{eq.tbshift}
\frac{\fad(x+c_{\mathsf{b}}+\im\mathsf{b}^{-1})}{
\fad(x+c_{\mathsf{b}})} &= 
\frac{1}{1-\tq^{-1} \re^{2 \pi \mathsf{b}^{-1} x}} \,.
\end{align}
\end{subequations}

%
%


\bibliographystyle{hamsalpha}
\bibliography{biblio}
\end{document}